\pgfplotsset{compat=1.17}  
\theoremstyle{definition}
\newtheorem{theorem}{Theorem}
\begin{document}

\title[Locality-Sensitive Hashing Does Not Guarantee Privacy!]{Locality-Sensitive Hashing Does Not Guarantee Privacy!\\Attacks on Google's FLoC and the MinHash Hierarchy System}



\author{Florian Turati}
\affiliation{%
  \institution{ETH Zurich}
  \city{Zurich}
  \country{Switzerland}}
\email{florian.turati@inf.ethz.ch}

\author{Carlos Cotrini}
\affiliation{%
  \institution{ETH Zurich}
  \city{Zurich}
  \country{Switzerland}}
\email{ccarlos@inf.ethz.ch}

\author{Karel Kubicek}
\orcid{0000-0002-7419-2784}
\affiliation{%
  \institution{ETH Zurich}
  \city{Zurich}
  \country{Switzerland}}
\email{karel.kubicek@inf.ethz.ch}

\author{David Basin}
\orcid{0000-0003-2952-939X}
\affiliation{%
  \institution{ETH Zurich}
  \city{Zurich}
  \country{Switzerland}}
\email{basin@inf.ethz.ch}

\renewcommand{\shortauthors}{Turati et al.}

\begin{abstract}
Recently proposed systems aim at achieving privacy using locality-sensitive hashing. We show how these approaches fail by presenting attacks against two such systems: Google's FLoC proposal for privacy-preserving targeted advertising and the MinHash Hierarchy, a system for processing mobile users' traffic behavior in a privacy-preserving way. Our attacks refute the pre-image resistance, anonymity, and privacy guarantees claimed for these systems.

In the case of FLoC, we show how to deanonymize users using Sybil attacks and to reconstruct 10\% or more of the browsing history for 30\% of its users using Generative Adversarial Networks. We achieve this only analyzing the hashes used by FLoC. For MinHash, we precisely identify the movement of a subset of individuals and, on average, we can limit users' movement to just 10\% of the possible geographic area, again using just the hashes. In addition, we refute their differential privacy claims.
\end{abstract}

\keywords{LSH, FLoC, MinHash, SimHash, Privacy}

\maketitle

\section{Introduction}

Locality-sensitive hashing (LSH)~\cite{rajaraman2011mining} is a group of hash functions that map, with high probability, similar objects to the same hash. Comparing hashes instead of entire objects then results in an efficient procedure that has been used, for example, for plagiarism detection~\cite{stein2007principles}, detecting duplicate websites or images~\cite{manku2007detecting, jing2008visualrank}, dimensionality reduction~\cite{brinza2010rapid}, and clustering~\cite{haveliwala2000scalable}.

Recent works~\cite{minhashprivacy, floc-whitepaper, Malekzadeh_2018, apple-csam} have used LSH to process sensitive data, where it is assumed that the hashes can be made public without compromising the users' privacy. For example, Google proposed FLoC~\cite{floc-whitepaper}, a method for private targeted advertising. It uses LSH to map browsing histories to hashes such that users with similar browsing histories are likely to have the same hash. The hashes are then grouped into \emph{cohorts}. The idea is that each cohort contains users with similar browsing histories. The advertiser then learns only each cohort's identifier rather than each user's browsing history.\looseness=-1

A second example is Apple CSAM~\cite{apple-csam}, designed to detect child sexual abuse material in iCloud photos while preserving user privacy. It uses LSH to map images to hashes such that similar images have the same hash. This allows Apple to detect if abusive images are on a device. The hashes are intended, however, to prevent Apple from learning anything not related to abusive images.

We illustrate how systems that attempt to provide privacy using LSH are flawed. In particular, LSH hashes leak information about the input, since they do not provide security properties like pre-image resistance. None of the referenced works, however, were concerned by the privacy implications of this information leakage or considered its seriousness. We therefore investigate the severity of the leakage by developing new attacks on two recent applications: FLoC~\cite{floc-whitepaper} and the MinHash Hierarchy system~\cite{minhashprivacy}. 

\textbf{FLoC} is a system for private targeted advertising, proposed by Google. It uses SimHash to cluster users so that users with similar browsing histories are likely to be in similar cohorts. It aims at providing $k$-anonymity~\cite{sweeney2002k} while keeping the cohorts useful for targeted advertisements. 

\textbf{MinHash Hierarchy} is a system for analyzing traffic trajectories. It computes statistics on trajectories of mobile devices for urban planning, while ensuring differential privacy for these devices. It works by having cell stations store hashes that represent subsets of all mobile devices passing by. 

In this paper, we present three kinds of critical attacks on FLoC, which we illustrate using the MovieLens dataset~\cite{Harper2015-cx}. First, we present a pre-image attack on SimHash using integer programming. We then design a Sybil attack~\cite{douceur2002sybil} that generates dozens of histories per second whose hash matches the target hash. We demonstrate how this attack breaks FLoC's $k$-anonymity, and hence we can identify individuals in cohorts. Furthermore, using Generative Adversarial Networks (GANs)~\cite{goodfellow2020generative, guo2017long}, we partially reconstruct plausible histories from just the target hash. With this attack, we show how to break FLoC's privacy claims and infer some of the websites visited by users. Specifically, we can reconstruct $10\%$ or more of the history of at least $30\%$ of the users. Although FLoC is no longer used by Google, we present these attacks here to highlight the privacy limitations of LSH-based systems.

In the context of the MinHash Hierarchy, we demonstrate using taxi trajectories in the city of Porto,\footnote{Porto is the second largest city of Portugal with 232 thousand citizens occupying 41 km$^2$. Our dataset covers roughly $80$ km$^2$, since it includes the surrounding urban area.} that we can decide for some individuals whether they followed a particular trajectory, violating their differential privacy guarantee. Our attack narrows down the taxi drivers' trajectories, on average, to around $10\%$ of the  city's area, which corresponds to a neighborhood of Porto.

In both of these proposals, our attacks show that if the hashes can measure the similarity of objects, then they also contain fingerprints of the object itself. The amount of data in this fingerprint is bounded by the hash size, and while larger hashes provide more utility, they also contain more sensitive information. We are the first to evaluate this information leakage by attacking proposals of significant importance and measuring the information that attackers gain. 
Implementations of our attacks and more information are available at \url{https://karelkubicek.github.io/post/floc}.
Although some countermeasures have been proposed, we discuss in Section~\ref{sec:mitigations} how they fail to prevent our attacks.

Overall, \textbf{our contributions} are the following. First, we present a \textbf{pre-image attack} on SimHash using integer programming. Second, we implement practical \textbf{Sybil attacks} on FLoC by finding pre-images of a target SimHash. We show how this breaks the $k$-anonymity promised for FLoC by isolating real users in a cohort. Third, using GANs, we implement a \textbf{privacy attack} that reconstructs more than $10\%$ of the browsing history of at least $30\%$ of users based only on the FLoC hash. We also show how to amplify this attack to increase the size of the reconstructed history by exploiting the changes in users' SimHash. Finally, for the MinHash Hierarchy system, we present \textbf{privacy and pre-image attacks} that identify a subset of the individuals that visited a given checkpoint. We also show that we can track users to narrow down their trajectory to an average of $10\%$ of the city area, which corresponds to a local neighborhood.

\section{Federated Learning of Cohorts}

In this section, we give some preliminaries on locality-sensitive hashing (\cref{sec:background-simhash}). Afterwards, we present SimHash, a class of LSH, and FLoC, a proposal for privacy-preserving targeted advertisements (\cref{sec:simhash-application}). 

\subsection{SimHash} \label{sec:background-simhash}

Locality-sensitive hashing (\emph{LSH}) is a class of hash functions mapping similar inputs to similar outputs~\cite{rajaraman2011mining}. These hash functions are usually neither collision nor pre-image resistant like cryptographic hash functions.

In this section, we explain locality-sensitive hashing (\emph{LSH}) and SimHash, a popular instance proposed by Charikar~\cite{charikar2002similarity} and used by FLoC for privacy-preserving targeted advertising. In particular, Google researchers used SimHash to detect near duplicate websites with the search crawler Googlebot and more recently to measure the similarity between two browsing histories in FLoC. We explain next how SimHash works in the context of browsing histories.

We describe how to compute the SimHash of length $\ell$ of a browsing history $D$, which we represent as a finite set of domains. First, we produce for each domain $d \in D$ a \emph{fingerprint vector}, which is a vector $\eta_{d} \in \mathbb{R}^\ell$ sampled from the standard multivariate Gaussian in $\mathbb{R}^\ell$ using a pseudo-random generator that takes $d$ as the seed. Then we compute $y^{(D)} = \sum_{d \in D} \eta_d$ and the SimHash is $z^{(D)} = \text{sgn}\left(y^{(D)}\right)$, where $\text{sgn}$ applies the elementwise sign function to each entry of $y^{(D)}$. Note that the SimHash $z^{(D)} \in \{0, 1\}^n$ is a binary vector.

We now give an intuition of why the SimHash is locally sensitive. Suppose that $D$ and $D'$ are two browsing histories of the same size that differ in only one element. Then the sets of fingerprint vectors for $D$ and $D'$ differ in at most one vector. As a result, the sum $y^{(D)}$ of fingerprint vectors in $D$ is probably similar to the sum $y^{(D')}$. Therefore, $z^{(D)}$ and $z^{(D')}$ are probably the same. Note that the greater the number of different elements that $D$ and $D'$ have, the more unlikely it is that $z^{(D)} = z^{(D')}$.

We illustrate this computation for $\ell=5$ in \cref{fig:simhash-comp-ex}.
We have two browsing histories with three domains and a 5-bit target SimHash. The two browsing histories only differ in one domain and the resulting SimHash values only differ in one bit. Note how a slight change in the input changed only one bit of the resulting SimHash.


\begin{table}[]
\caption{Example of a SimHash computation.} 
\label{fig:simhash-comp-ex}
\resizebox{\columnwidth}{!}{%
\begin{tabular}{@{}rrrrrr rrrrrr@{}}
\toprule
\textbf{Hist. 1} & \multicolumn{5}{c}{\textbf{Domain fingerprints $\eta_{d}$}} & \textbf{Hist. 2} & \multicolumn{5}{c}{\textbf{Domain fingerprints $\eta_{d}$}} \\ \midrule
google & \cellcolor[HTML]{3D85C6}2.03 & \cellcolor[HTML]{EEF5FA}0.18 & \cellcolor[HTML]{BFD7EC}0.67 & \cellcolor[HTML]{C3DAEE}0.62 & \cellcolor[HTML]{E88E8E}-0.88 & google & \cellcolor[HTML]{3D85C6}2.03 & \cellcolor[HTML]{EEF5FA}0.18 & \cellcolor[HTML]{BFD7EC}0.67 & \cellcolor[HTML]{C3DAEE}0.62 & \cellcolor[HTML]{E88E8E}-0.88 \\
youtube & \cellcolor[HTML]{D83E3E}-1.51 & \cellcolor[HTML]{D11A1A}-1.79 & \cellcolor[HTML]{F8DDDD}-0.26 & \cellcolor[HTML]{B6D1EA}0.76 & \cellcolor[HTML]{94BCE0}1.11 & youtube & \cellcolor[HTML]{D83E3E}-1.51 & \cellcolor[HTML]{D11A1A}-1.79 & \cellcolor[HTML]{F8DDDD}-0.26 & \cellcolor[HTML]{B6D1EA}0.76 & \cellcolor[HTML]{94BCE0}1.11 \\
facebook & \cellcolor[HTML]{F9FBFE}0.07 & \cellcolor[HTML]{FEFBFB}-0.03 & \cellcolor[HTML]{D73939}-1.55 & \cellcolor[HTML]{EFAFAF}-0.62 & \cellcolor[HTML]{639DD2}1.61 & netflix & \cellcolor[HTML]{D3E3F2}0.46 & \cellcolor[HTML]{BFD7EC}0.67 & \cellcolor[HTML]{ECF3FA}0.20 & \cellcolor[HTML]{DF6060}-1.24 & \cellcolor[HTML]{FDFEFF}0.03 \\ \midrule
\textbf{sum:} & \cellcolor[HTML]{C6DCEF}0.59 & \cellcolor[HTML]{D52D2D}-1.64 & \cellcolor[HTML]{E16D6D}-1.14 & \cellcolor[HTML]{B6D1EA}0.76 & \cellcolor[HTML]{4D8FCB}1.84 & \textbf{sum:} & \cellcolor[HTML]{A0C4E4}0.98 & \cellcolor[HTML]{E78787}-0.94 & \cellcolor[HTML]{C4DAEE}0.61 & \cellcolor[HTML]{F2F7FC}0.14 & \cellcolor[HTML]{E6F0F8}0.26 \\ \midrule
\textbf{sign:} & \cellcolor[HTML]{9FC5E8}1 & \cellcolor[HTML]{EA9999}0 & \cellcolor[HTML]{EA9999}\textbf{0} & \cellcolor[HTML]{9FC5E8}1 & \cellcolor[HTML]{9FC5E8}1 & \textbf{sign:} & \cellcolor[HTML]{9FC5E8}1 & \cellcolor[HTML]{EA9999}0 & \cellcolor[HTML]{9FC5E8}\textbf{1} & \cellcolor[HTML]{9FC5E8}1 & \cellcolor[HTML]{9FC5E8}1 \\ \bottomrule
\end{tabular}
}
\end{table}

\subsection{Application of SimHash to Privacy}\label{sec:simhash-application}

In this section, we present \emph{Federated Learning of Cohorts} (FLoC)~\cite{floc-whitepaper}.
FLoC is a proposal from Google researchers to partially replace third-party cookies and perform privacy-preserving targeted advertisements. The idea is that users are grouped into cohorts so that users with similar browsing histories are assigned to the same cohort. Each cohort then gets an identifier. Instead of revealing personal browsing histories to advertisers, only the cohorts' identifiers are revealed.

\subsubsection{Clustering of SimHashes}\label{sec:clustering-simhash}

The FLoC proposal states that a SimHash is computed in the client's browser and serves as a history fingerprint, and only the hash is shared with a centralized clustering server. This server assigns a user the cohort identifier, where a cohort is a cluster of users with similar SimHashes. We illustrate the clustering procedure in \cref{fig:floc-clustering-ex} and describe it below.

\begin{figure} 
    \centering
    \includegraphics[width=\linewidth]{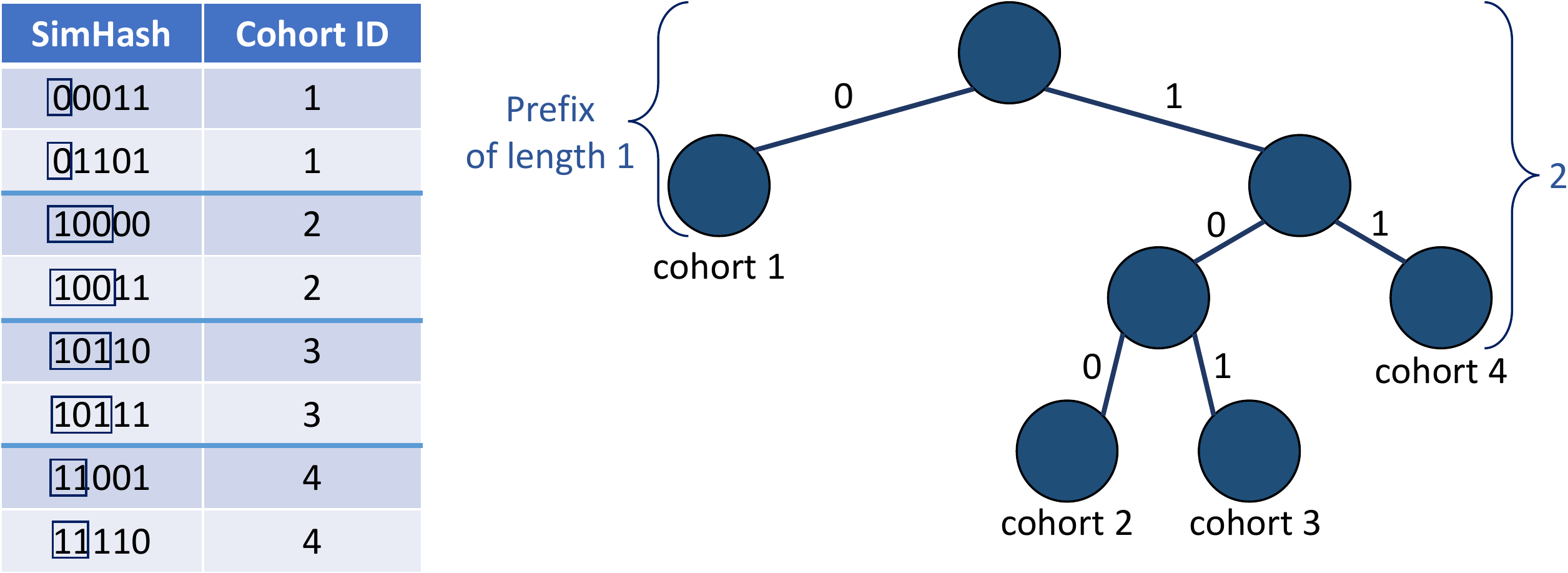} 
    \caption{Example of a clustering with FLoC}
    \label{fig:floc-clustering-ex}
\end{figure}

Let $\mathcal{D}$ be the set of browsing histories of a given set of users. For a bitstring $\sigma \in \{0, 1\}^*$, let $C_\sigma = \left\{D \in \mathcal{D} : \sigma \prec z^{(D)}\right\}$; that is, $C_\sigma$ contains all users (i.e., browsing histories) whose SimHash has $\sigma$ as a prefix. We call $C_\sigma$ a \emph{cohort} and we say that a cohort is \emph{$k$-decomposable}, for $k \in \mathbb{N}$, if $\left|C_{\sigma0}\right| \geq k$ and $\left|C_{\sigma1}\right| \geq k$. The clustering procedure starts with a clustering $\mathcal{C} = \{C_\epsilon\}$, where $\epsilon$ is the empty bitstring; that is, there is only one cluster at the start containing all users. Then a value $k \in \mathbb{N}$ is fixed. As long as there is a $k$-decomposable cluster $C_\sigma \in \mathcal{C}$, the procedure replaces $C_\sigma$ with $C_{\sigma0}$ and $C_{\sigma1}$. The idea is that each cohort in $\mathcal{C}$ provides $k$-anonymity, while containing a set of users with similar browsing histories.

\cref{fig:floc-clustering-ex} illustrates the result of the clustering procedure on a set of eight users. The table given there shows the SimHash of the browsing history of each user and an identifier of the cohort to which they have been assigned to by the clustering procedure. Note how each cohort has $k=2$ users. The tree in the figure illustrates how the clustering procedure divided $2$-decomposable cohorts until reaching the clustering assignment depicted in the table.

\subsubsection{Origin Trial}

From March $30$ to July $13$ $2021$, Google tested FLoC in its Origin trial~\cite{floc-chromium-ot}. Users of Chrome version numbers $89$ - $91$ located in ten countries 
were eligible for the experiment. Only $0.5\%$ of these eligible users were involved in the Origin trial, and only websites that requested a FLoC ID were added to the history used for FLoC computation. $50$-bit SimHashes were computed on a domain history of one week. Out of the $50$ bits, only $13$ to $20$ bits were necessary to split the users into around $33\,000$ cohorts of at least $2000$ users.

Despite the very small sample of users, some advertisers were successful in identifying topics of interest for users in the cohorts. For example, we refer to Criteo's blog~\cite{criteo-flocot} for the evolution over time of a cohort with around $10\,000$ users. We summarize their world cloud representation of the most popular topics in \cref{tab:criteo-word-cloud-evolution-large-cohort}, keeping only the five main topics. Also note that the main topics would vary a lot more in the case where only a small number of users can be observed.

\begin{table}
\centering
\caption{Criteo's example of the evolution of the most frequent topics browsed by a large cohort ($\approx 10\,000$ users)\label{tab:criteo-word-cloud-evolution-large-cohort}}
\resizebox{\columnwidth}{!}{%
\begin{tabular}{c c c}
 \toprule
 Week 0 & Week 3 & Week 5  \\ 
 \midrule
  Gaming & Gaming &  Tech. \& Computing   \\ 
  Tech. \& Computing & Tech. \& Computing & Gaming \\
  Books and Literature  & Education & News and Politics \\
  Education  & Shopping & Style \& Fashion\\
  Shopping & News and Politics & Healthy Living  \\
 \bottomrule
\end{tabular}
} 
\end{table}
CafeMedia, an ad management service, also analyzed the quality of cohorts for targeted advertising~\cite{cafemedia-flocot}. They formed groups of 1000 cohorts and computed the most frequent 10 keywords occurring in the browsing histories in those groups. \cref{tab:cafemedia-keywords} shows the top 10 keywords for $5$ groups. They could, for example, distinguish groups that are more interested in business and professional development and also groups that are more interested in leisure activities.

\begin{table}
\centering
\caption{CafeMedia's extracted interest keywords of the selected cohorts (see the complete table in~\cite{cafemedia-flocot})\label{tab:cafemedia-keywords}}
\resizebox{\columnwidth}{!}{%
\begin{tabular}{c c c c c c}
     \toprule
    Cohort IDs & \multicolumn{5}{c}{Keywords}  \\
     \midrule
     \textbf{0-1k} & music & support & grade & questions & season  \\ 
     \textbf{1k-2k} & dogs & guides & working & things & roast  \\ 
     \textbf{2k-3k} & writing & magic & vegetables & movies & slow  \\ 
     \textbf{3k-4k} & prime & high & rolls & magic & chili  \\
     \textbf{4k-5k} & weekly & world & disney & magic & sheets  \\
     \bottomrule
\end{tabular}
} 
\end{table}

\section{Attacks on FLoC}\label{sec:attack-floc}

In this section, we present attacks that break FLoC's privacy properties. We first present a pre-image attack on SimHash (\cref{sec:ip-preimage-attack}) that breaks its pre-image resistance property. In our experiments, the pre-image attack can be used to mount a Sybil attack to break its $k$-anonymity property as well (\cref{sec:sybil-attack}). Using Generative Adversarial Networks (\emph{GANs})~\cite{goodfellow2020generative, guo2017long}, we propose the \emph{GAN-IP attack}, which recovers parts of the browsing history of real users, since GANs can be used to generate plausible browsing histories for users in a target cohort (\cref{sec:gan-ip}). The GAN-IP attack can reconstruct $10\%$ or more of the history in at least $30\%$ of the cases, breaking FLoC's guarantees of keeping browsing histories private. \cref{table:attacks-on-flocs} and \cref{fig:attacks-on-floc} summarize and illustrate the three attacks that we present.

\begin{figure*} 
    \centering
    \includegraphics[width=\linewidth]{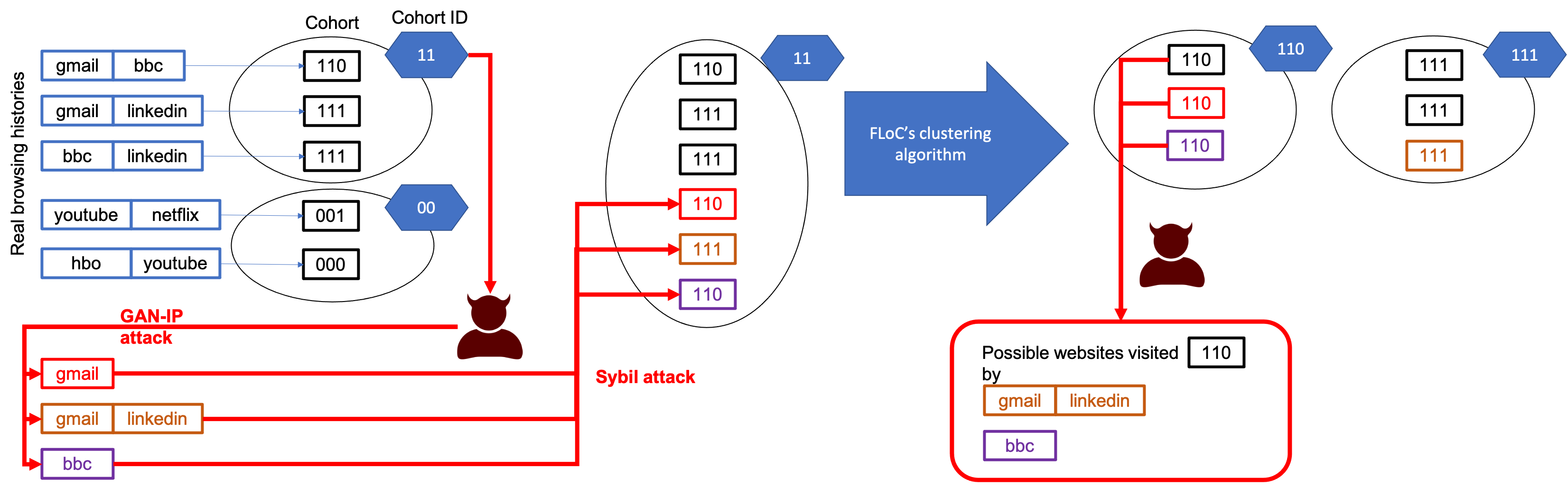} 
    \caption{How an attacker extracts private information from FLoC. First, the attacker takes a cohort ID $\gamma$ and then uses the GAN-IP attack to create fake browsing histories whose SimHash contains $\gamma$ as prefix. These SimHashes make the cohort decomposable, so  FLoC's clustering algorithm divides the cohort into smaller cohorts. The attacker exploits the fake histories to infer websites visited by real users.}
    \label{fig:attacks-on-floc}
\end{figure*}

\begin{table} 
    \centering
    \caption{Attack summary table}
    \label{table:attacks-on-flocs}
\resizebox{\columnwidth}{!}{%
    \begin{tabular}{c c c c} 
         \toprule
         Attack Name & Privacy Properties & Type of Attack \\ 
         \midrule
          Integer Programming & Pre-image Resistance & Pre-image Attack  \\ 
          Sybil  & $k$-anonymity & Forgery Attack\\ 
          GAN-IP & User Browsing Privacy  & Privacy Attack \\
         \bottomrule
    \end{tabular}
} 
\end{table}

We give an overview of the GAN-IP attack. Using a GAN we generate plausible user histories. Those histories are then given to an integer program. For each history, the integer program finds a non-empty subset of the history that matches the given target SimHash. \cref{fig:gen-ip-pipeline} illustrates the GAN-IP attack. We can optionally apply the GAN's discriminator on the integer program's output, as shown in the green frame in \cref{fig:gen-ip-pipeline}. In this way, the discriminator gives us a score on how realistic the produced browsing history is.

\begin{figure} 
    \centering
    \includegraphics[width=\linewidth]{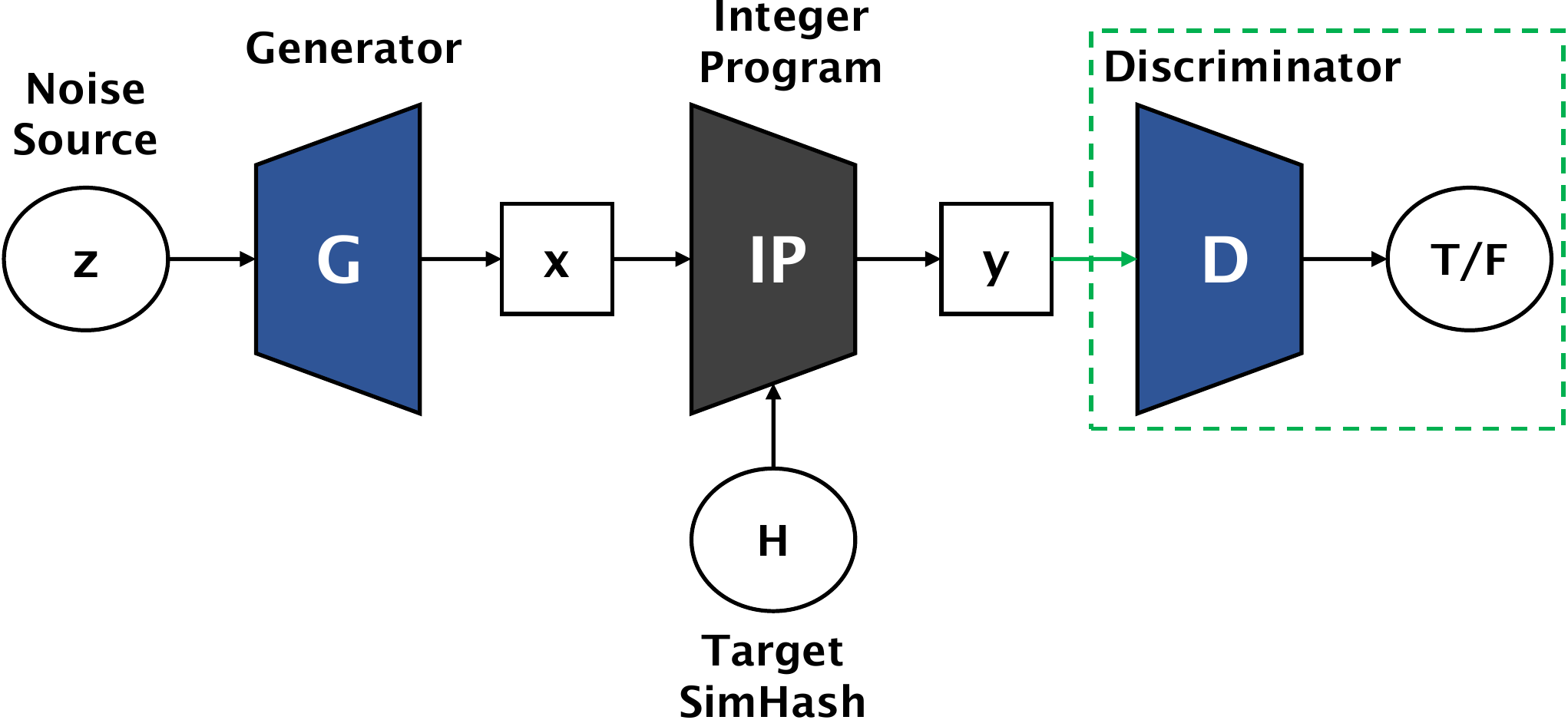} 
    \caption{Pipeline: integer programming on the generator outputs. The green boxed part is optional.}
    \label{fig:gen-ip-pipeline}
\end{figure}

\textbf{Attacker model.} We assume that the attacker's goal is to infer private information about the browsing history of a target user. For that, we assume the following capabilities. (1) The attacker has access to the FLoC implementation used by the users' devices. This is trivial since the code is embedded in the open-source Chrome browser. (2) The attacker can see the target user's FLoC ID, which the user sends to all websites embedding a FLoC request. (3) The attacker can actively create new users in the FLoC system. This is possible because the server that assigns cohort IDs takes as input only the SimHash and, therefore, it cannot distinguish genuine users from bots. (4) The attacker has access to the browsing histories of a sample of the user population, which can also be purchased from companies such as Comscore.\footnote{\url{https://www.comscore.com/}} Examples of such attackers, in an order of increasing capabilities, are operators of any website, tracking websites, and also Google itself.

We used the SimHash implementation from Chrome Version 91 according to capability 1. Our Sybil attack depends on knowledge of the target user's cohort ID (capability 2) and the ability to generate new users (capability 3). We train the model used for the GAN-IP attack on a publicly available dataset of movies, which the FLoC authors also used for evaluation. There exists also proprietary datasets of browsing histories that can be used in the real attack (capability~4).

\subsection{Integer Programming Pre-image Attack}\label{sec:ip-preimage-attack}

We now show how to compute pre-images of SimHashes using integer linear programming. Assume given a set $D = \left\{d_1, \ldots, d_n\right\}$ of domains (e.g., output by the GAN's generator) and a SimHash $z \in \mathbb{R}^\ell$ and we want to find a subset $D^* \subseteq D$ whose SimHash is $z$. We start by observing that $D^*$ must fulfill the following condition, by the definition of SimHash,
\begin{equation}
\text{sgn}\left(\sum_{d \in D^*}\eta_{d, j}\right) = z_j, \quad \text{for $j \leq \ell$},
\end{equation}
where $\eta_{d, j}$ is the $j$-th entry of $\eta_d$.
If we unfold the definition of $\text{sgn}$, this condition becomes the following:
for $j \leq \ell$, 
$\sum_{d \in D^*}\eta_{d, j} \geq 0$, if $z_j = 1$, and
$\sum_{d \in D^*}\eta_{d, j} < 0$, otherwise.
We can rewrite this condition as follows:
\begin{equation}
    (2z_j - 1) \sum_{d \in D^*}\eta_{d, j} \geq 0, \text{for $j \leq \ell$.}
\end{equation}

We see then that finding a pre-image of the SimHash $z$ reduces to finding a subset $D^* \subseteq D$ that fulfills these $\ell$ inequalities. We now show how to do this using integer programming. We first represent subsets of $D$ as bitstrings in $\{0, 1\}^n$. A bitstring $x = (x_1, \ldots, x_n) \in \{0, 1\}^n$ denotes the subset $\left\{d_i : i \leq n, x_i = 1\right\}$. If $x^*$ is the bitstring representation of $D^*$, we can then rewrite the condition as:
\begin{equation}
    (2z_j - 1) \sum_{i \leq n}\eta_{d, j}x_i \geq 0, \text{for $j \leq \ell$.}
\end{equation}
This leads to the following linear integer program:
\begin{align}
    \max_x & \quad \sum_{i \leq n} x_i\\
    s.t. & \quad \left(2z_j - 1\right) \sum_{i \leq n} \eta_{d, j} x_i \geq 0, \text{ for $j \leq \ell$}\label{ip:const}\\
    & \quad x_i \in \{0, 1\}.
\end{align}
Note that by maximizing $\sum_{i \leq n} x_i$, we seek the largest subset $D^* \subseteq D$ that fulfills the conditions. Hence, this program searches for the largest subset of $D$ that yields the desired SimHash $z$. The maximization is also necessary to avoid outputting $x = 0^n$, which is a trivial solution. We summarize these insights with the following theorem.\looseness=-1

\begin{theorem}
Assume given a SimHash $z$ and the integer program $\mathit{IP}(D)$ above. If $x^* \in \{0, 1\}^n$ is an optimal solution to the program below and $x^* \neq 0$, then the SimHash of $x^*$ is $z$. 
\end{theorem}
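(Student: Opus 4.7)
The plan is to prove the theorem by unwinding the chain of equivalences that motivated the formulation of $\mathit{IP}(D)$. Suppose $x^* = (x_1^*, \ldots, x_n^*) \in \{0, 1\}^n$ is an optimal solution to $\mathit{IP}(D)$ with $x^* \neq 0$, and let $D^* = \left\{d_i : i \leq n,\; x_i^* = 1\right\}$, which is nonempty by hypothesis. I would show that for every coordinate $j \leq \ell$, $\text{sgn}\bigl(\sum_{d \in D^*}\eta_{d, j}\bigr) = z_j$, from which $z^{(D^*)} = z$ follows immediately.

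First I would rewrite the feasibility constraint of $\mathit{IP}(D)$ in terms of $D^*$: since $x_i^* \in \{0, 1\}$ is the indicator of $d_i \in D^*$, we have $\sum_{i \leq n} \eta_{d_i, j}\, x_i^* = \sum_{d \in D^*} \eta_{d, j}$, so the constraint reads $(2z_j - 1)\sum_{d \in D^*}\eta_{d, j} \geq 0$ for every $j \leq \ell$. Then I would split on the two possible values of $z_j$. If $z_j = 1$, the factor $2z_j - 1$ equals $1$, so the sum is nonnegative and $\text{sgn}$ returns $1 = z_j$ by the convention of \cref{sec:background-simhash}. If $z_j = 0$, the factor equals $-1$, so the sum is nonpositive, giving sign bit $0 = z_j$ provided the sum is strictly negative.

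The main obstacle I anticipate is the corner case $\sum_{d \in D^*}\eta_{d, j} = 0$ on a coordinate with $z_j = 0$: the weak inequality in $\mathit{IP}(D)$ admits this, while $\text{sgn}(0) = 1$ would contradict the conclusion. I would handle this in one of two standard ways. Either observe that each $\eta_{d, j}$ is sampled from a continuous Gaussian, so the event that any signed partial sum over a subset of $D$ vanishes exactly has Lebesgue measure zero and does not arise generically; or strengthen the IP by replacing the constraint for coordinates with $z_j = 0$ by $-\sum_{i} \eta_{d_i, j}\, x_i \geq \varepsilon$ for a small positive $\varepsilon$, which preserves the feasible region up to numerical precision without affecting the objective. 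Finally, I would point out that the role of the hypothesis $x^* \neq 0$ and of the objective $\sum_i x_i$ is precisely to exclude the all-zero feasible point, which trivially satisfies every inequality but has no interpretation as a pre-image; this is why both appear in the statement.
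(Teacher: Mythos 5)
Your proposal is correct and follows essentially the same route as the paper, which proves the theorem implicitly by the chain of rewritings preceding its statement: translating the bitstring $x^*$ back to the subset $D^*$ and checking the sign condition coordinate-wise is exactly that derivation run in reverse. You are in fact slightly more careful than the paper, which silently replaces the strict inequality $\sum_{d \in D^*}\eta_{d,j} < 0$ (for $z_j = 0$) by the weak one in the IP constraint; your observation that the boundary case has measure zero for Gaussian fingerprints (or can be removed with an $\varepsilon$-margin) closes a gap the paper leaves unaddressed.
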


As illustration we present the integer program with a history $D$ containing exactly \texttt{google.com}, \texttt{youtube.com}, and \texttt{facebook.com}. This is the history on the left of \cref{fig:simhash-comp-ex}
As a target SimHash, we choose the SimHash of the right history ($\underline{\mathbf{10111}}$). We get the following integer program. We maximize $\sum_{i \le 3 } x_i$ with the constraints
\begin{align*}
    \left(2\cdot\underline{\mathbf{1}} - 1\right) \cdot \left( -0.88\cdot x_1 + 1.11 \cdot x_2 + 1.61\cdot x_3\right)  &\geq 0 \\
    \left(2\cdot\underline{\mathbf{1}} - 1\right) \cdot \left( 0.62\cdot x_1 + 0.76 \cdot x_2 - 0.62\cdot x_3\right)  &\geq 0 \\
    \left(2\cdot\underline{\mathbf{1}} - 1\right) \cdot \left( 0.67\cdot x_1 - 0.26 \cdot x_2 - 1.55\cdot x_3\right)  &\geq 0 \\
    \left(2\cdot\underline{\mathbf{0}} - 1\right) \cdot \left( 0.18\cdot x_1 - 1.79 \cdot x_2 - 0.03\cdot x_3\right)  &\geq 0 \\
    \left(2\cdot\underline{\mathbf{1}} - 1\right) \cdot \left( 2.03\cdot x_1 - 1.51 \cdot x_2 - 0.07\cdot x_3\right)  &\geq 0.
\end{align*}
The optimal solution is $(x_1, x_2, x_3) = (1,1,0)$.
We conclude that the \texttt{facebook.com} domain of the history in the left-hand side of \cref{fig:simhash-comp-ex} must be removed to match the target SimHash for the history on the right-hand side. This means that the \texttt{netflix.com} domain of the right history is redundant, since it does not change the SimHash of the remaining domains.

\begin{table} 
    \centering
    \caption{Benchmark of GAN -- Integer Program}
    \label{table:attack-gan-ip}
    \begin{tabular}{c c c c} 
         \toprule
         SimHash Length & Success Rate & Int. Program Time \\ 
         \midrule
          5 & 100\% & $0.52$ s \\ 
          10 & 95\%  & $2.01$ s \\ 
          15 & 64\% & $5.03$ s \\
          20 & 34\% & $5.89$ s  \\
          25 & 11\% & $12.83$ s \\
         \bottomrule
    \end{tabular}
\end{table}

Although finding a pre-image of SimHash is NP hard, our integer programming attack is very efficient for the used bit lengths and history sizes, as illustrated in \cref{table:attack-gan-ip}. We vary the SimHash bit length from $5$ to $25$ in increments of $5$. Recall that in FLoC trials the SimHash length varied from $13$ to $20$ bits. For a given SimHash length, we sample a real history and compute its corresponding SimHash. The integer program then starts with a history $D$ of $32$ elements which can be either random or generated using a GAN introduced in \cref{sec:gan-ip}.

In \cref{table:attack-gan-ip}, the ``Success Rate'' column reports the percentage of histories generated by the GAN for which we could find a subset matching the target SimHash. We also report the average runtime in the ``Int. Program Time'' column. These results are based on executions on four different histories of real users generating at least 25 pre-image histories with the same SimHash.

This demonstrates that it is very efficient to find pre-images for a target SimHash. This facilitates the creation of fake users and the inference of private information the browsing history of real users.

\subsection{Sybil Attack}\label{sec:sybil-attack}

The privacy goals of FLoC is to achieve $k$-anonymity for the users~\cite{floc-whitepaper}. A Sybil attack floods a system with real users by generating fake (Sybil) entities. We show how using integer programming, we can mount a Sybil attack to deanonymize users hiding in clusters. The Sybil attack can isolate users in a cohort and identify them, breaking the $k$-anonymity property of FLoC.

Our Sybil attack works by observing a target user's cohort ID; that is, the substring $\sigma$ of the target user's cohort $C_\sigma$. Then we generate many users whose SimHash have $\sigma$ as prefix, called \emph{Sybil users}. They will all be assigned to the same cohort $C_\sigma$ by the clustering algorithm, described in \cref{sec:clustering-simhash}. In this way $C_\sigma$ eventually becomes decomposable and then it would be divided into $C_{\sigma0}$ and $C_{\sigma1}$. By repeating this, we can infer a sufficiently long prefix $\sigma'$ of the target user's SimHash and then observe the browsing histories of the Sybil users assigned to $C_{\sigma'}$ to obtain information about the target user's browsing history. Note that by creating sufficiently many Sybil users, we could ensure that $C_{\sigma'}$ consists only of a few real users and the rest only of Sybil users, making it easier to analyze their browsing histories. Also note that the cohort-assignment server has no mechanism protecting it from bots, since its only input is the SimHash.

In \cref{fig:sybil-illustration}, we demonstrate this attack on a toy example. At timestamp 1, we have the cohorts $C_0$ and $C_1$. The minimum size for a cohort is $k=2$. Mounting a Sybil attack to extend the prefix length of some cohorts, we generate two fake Sybil users, which are assigned to cohort $C_0$. The new Sybil users make $C_0$ $k$-decomposable, so the clustering procedure partitions $C_0$ into $C_{01}$ and $C_{01}$. Observe that $C_{01}$ consists now of two Sybil users and one real user. So the attacker can approximate the browsing history of that user using the generated browsing histories of the Sybil users.

\begin{figure} 
    \centering
    \includegraphics[width=\linewidth]{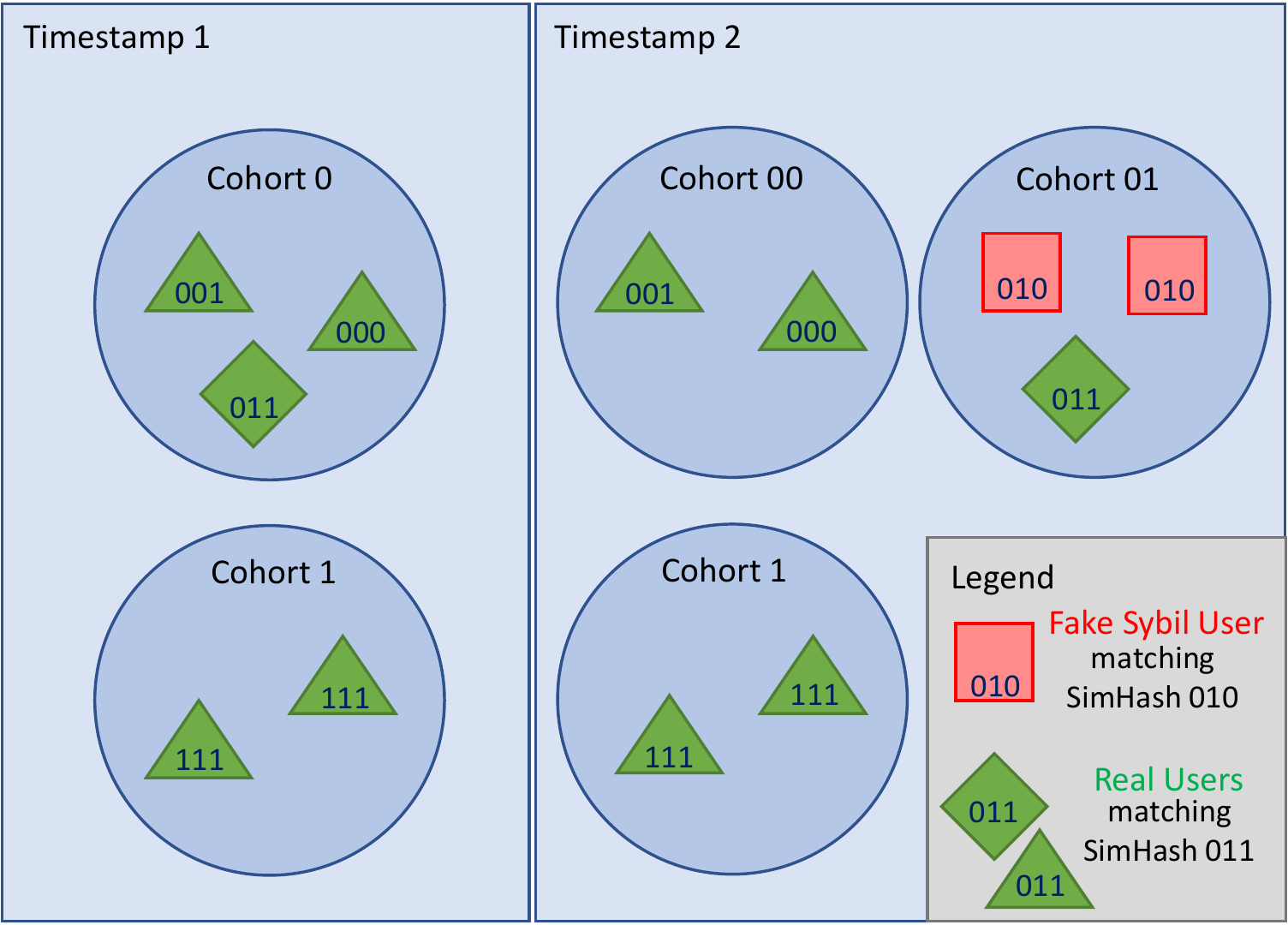} 
    \caption{Sybil attack example}
    \label{fig:sybil-illustration}
\end{figure}

\subsection{GAN-IP Privacy Attack}\label{sec:gan-ip}

The browsing histories generated by our integer program may not resemble a browsing history produced by a human. To produce a more realistic distribution of browsing histories and to gain more insights on the histories hidden in a cohort, we combine GANs with our integer programming attack to produce the GAN-IP attack.

Generative Adversarial Networks (GANs) can generate new samples from the same distribution as the training data.
A GAN consists of two neural networks, a generator \emph{G} and a discriminator \emph{D}. They compete against each other during training. The generator learns to produce realistic samples with the objective of deceiving the discriminator, while the discriminator learns to differentiate between the generated and real samples. From the several implementations available, we chose LeakGAN~\cite{guo2017long} because it is designed for text generation. However our attack works with essentially any GAN that can be adapted to produce users' histories.

We now present the GAN-IP attack. Suppose that we are given a SimHash $z$ of a given browsing history $h$ and that we want to produce a set $H$ of histories whose SimHashes are all equal to $z$. First, we use the LeakGAN to produce a set $H'$ of histories that resemble a sample from the distribution of browsing histories. Then for each $f \in H$, we attempt to compute a solution $x^*_f$ of $IP(f)$, the integer program induced by $f$. The desired set $H$ is $\left\{x^*_f : f \in H, IP(f) \text{ has a non-trivial solution}\right\}$. The attack is illustrated in \cref{fig:gen-ip-pipeline}.

To summarize, we can combine the three attacks above to extract private information as follows. First, we use a GAN to learn a distribution of users' histories such that, in approximately 30\% of the cases, the generated user will share 10\% or more of the history with the target user in the cohort. Using the GAN's generator, we can then produce fake browsing histories that look like histories from real users. Afterwards, we compute from this generated history a subset that matches a particular SimHash prefix of a target cohort using the IP-attack. These matching histories allow us to mount a Sybil attack, breaking not only k-anonymity of users, but extending the prefix length used to assign the cohort. This leaks more of users' SimHash, forming a self-reinforcing loop for the IP-attack, inferring parts of the users' browsing history.

\section{Attack implementation for FLoC}

For data protection reasons, we do not have access to a public browsing history dataset. To evaluate our attacks, we instead use the MovieLens dataset~\cite{Harper2015-cx}. An entry in this dataset contains movies watched by users over a period of time. Note that a movie history reflects a user's preferences and can be used to infer movie recommendations for that user. For these reasons, the MovieLens dataset acts as a good proxy to evaluate how our attacks would work in real browsing histories. We also remark that the FLoC's whitepaper also used this same dataset to evaluate FLoC~\cite{floc-whitepaper}.

We launch our GAN-IP attack on different movie histories from the MovieLens dataset. We demonstrate that the movie histories produced by our GAN-IP attack contain on average at least 10\% of the movie histories targeted by our attack. Furthermore, in about 50\% of our tests, the movie histories produced by the GAN component alone contain at least 10\% of the targeted histories. This demonstrates that the GAN-IP attack can extract information that was intended to remain private by the FLoC system.\looseness=-1

\subsection{Setup}

For demonstration purposes, the GAN was trained to produce movie histories with at most $32$ histories and using only the $5000$ most watched movies. However, our attack can be extended to larger movie histories and larger sets of movies.
We divided the MovieLens dataset into a training set and a test set. 
The training set contains $120\,000$ histories and the test set $5000$ histories.
The LeakGAN used by the GAN-IP attack was trained for $12$ hours using an Nvidia GeForce RTX $2070$ Super.

We evaluate our GAN-IP attack with 5 movie histories sampled from the test set. For each movie history $h_i$, with $i \leq 5$, we compute its SimHash $s_i$ and give it as input to the GAN-IP attack, which produces a set of movie histories $\hat{H}_i$ whose SimHash is also $s_i$. The set $\hat{H}_i$ contains at least 200 histories and $s_i$ is 15 bits long. We evaluate the quality of $\hat{H}_i$ with $I_i = \frac{1}{\left| \hat{H}_i\right|}\sum_{\hat{h} \in \hat{H}_i} \left|\hat{h} \cap h_i\right|$,  the average number of movies that the generated histories of the GAN-IP attack have in common with the target history $h_i$. The quality of our attack is then $q := \frac{1}{5}\sum_i I_i$. When reporting $q$, we also report the standard deviation of $I_1, \ldots, I_5$. We measure $q$ on various GAN models. As a baseline, we can use a random generator instead of the GAN’s generator. Note that $q$ indicates how much of the browsing history generated by our attack can be used to infer the movie history of a user with the same SimHash. Hence, our attacks shall maximize this value $q$.

\subsection{Results}

\textit{The GAN-IP attack can extract sensitive information from the SimHash.} \cref{table:eval-common-movies} reports $q$ for three versions of the GAN-IP attack: RAND, which uses only a random generator instead of a GAN to produce the set $H'$ of histories; GAN-41, which uses LeakGAN's weight from the saved training iteration 41; GAN-61, which is analogous to GAN-41 but for the later iteration 61. In parenthesis we give on average (in percentage) the part of the full generated history in common with the target history. For the GAN's generators, the average history length is approximately $27$ and $15$ after the integer program. For the random generator the numbers are $32$ and $17$. This sets the upper bound on the number of common movies, since the histories filtered by the integer program are only about half of the maximal length. Observe how GAN-41 produces higher values of $q$ than GAN-61 and RAND. Hence, stopping the training at iteration 41 yields histories with more movies in common with the target history.\looseness=-1

\begin{table} 
    \centering
    \caption{Distribution of Common Movie Counts}
    \label{table:eval-common-movies}
    \begin{tabular}{|c|c|c|c|c|} 
         \toprule
         Generator & \multicolumn{2}{|c|}{Common Movies $\pm$ stdev (\% of Gen. History Len.)} \\
         \hline
         & Generator & Int. Prog.\\
         \midrule
          RAND & $0.20 \pm 0.04$ ($<1\%$) & $0.17 \pm 0.03$ ($\approx1\%$)   \\
          GAN-41 & $\mathbf{2.39 \pm 0.96}$ ($\approx 9\%$) & $\mathbf{1.77 \pm 0.90}$ ($\approx 12\%$)   \\
          GAN-61 & $1.93 \pm 0.71$ ($\approx 7\%$) & $1.33 \pm 0.58$ ($\approx 9\%$)   \\
         \bottomrule
    \end{tabular}
\end{table}

\textit{The use of GANs significantly improves the attack's quality.} To demonstrate that the GAN-IP attack provides significant information, we compare the movies in the histories produced by RAND, GAN-41, and GAN-61. \cref{fig:cmovie-count-distrib-ip} is a histogram that shows, for $n \leq 11$ and each version of the GAN-IP attack, how many histories $h$ were produced such that $\left|h \cap h_i\right| = n$, for some $i \leq 5$. The number of movies that our generated histories have with the target histories is between 0 and 11. Observe how GAN-41 and GAN-61 in comparison with RAND have higher common movie counts with the target history. Therefore, histories generated with these GANs leak on average more information about the target history.

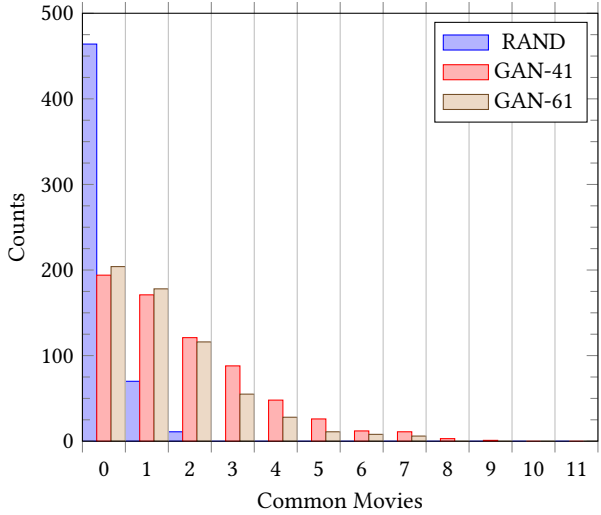
\begin{figure}
    \centering
    
    \begin{tikzpicture}
    \begin{axis}[
        ybar interval, 
        ymin=0, ymax=500, minor y tick num=3, 
        ylabel={Counts},
        bar shift=0pt, 
        xmin=0, xmax=12, 
        xtick={-1,0,...,13}, 
        xlabel={Common Movies},
        area style,
    ]
    \addplot coordinates { (0,464) (1,70) (2,11) (3,0) (4,0) (5,0) (6,0) (7,0) (8,0) (9,0) (10,0) (11,0) (12, 0) (13,0) } ;
    \addplot coordinates { (0,194) (1,171) (2,121) (3,88) (4,48) (5,26) (6,12) (7,11) (8,3) (9,1) (10,0) (11,0) (12, 0) (13,0)} ;
    \addplot coordinates { (0,204) (1,178) (2,116) (3,55) (4,28) (5,11) (6,8) (7,6) (8,0) (9,0) (10,0) (11,0) (12, 0) (13,0) } ;
    
    \legend{RAND, GAN-41, GAN-61}
    \end{axis}
    \end{tikzpicture}
    \caption{Histogram of common movie counts between $h$ and $\tilde{h}$ (with integer programming)}
    \label{fig:cmovie-count-distrib-ip}
\end{figure}

In \cref{fig:cmovie-count-distrib-gan}, we present an analogous histogram, but for the history produced only by the GAN. That is, we take the history $h'$ produced by the GAN before it was passed to the integer programming to produce the history $\tilde{h}$.\looseness=-1

\begin{figure}
    \centering
    
    \begin{tikzpicture}
    \begin{axis}[
        ybar interval, 
        ymin=0, ymax=500, minor y tick num=3, 
        ylabel={Counts},
        bar shift=0pt, 
        xmin=0, xmax=12, 
        xtick={-1,0,...,13}, 
        xlabel={Common Movies},
        area style,
    ]
    \addplot coordinates { (0,441) (1,91) (2,13) (3,0) (4,0) (5,0) (6,0) (7,0) (8,0) (9,0) (10,0) (11,0) (12, 0) (13,0) } ;
    \addplot coordinates { (0,104) (1,153) (2,117) (3,123) (4,67) (5,46) (6,30) (7,21) (8,8) (9,4) (10,1) (11,1) (12, 0) (13,0)} ;
    \addplot coordinates { (0,110) (1,154) (2,146) (3,80) (4,52) (5,37) (6,11) (7,11) (8,2) (9,1) (10,1) (11,1) (12, 0) (13,0) } ;
    
    \legend{RAND, GAN-41, GAN-61}
    \end{axis}
    \end{tikzpicture}
    \caption{Histogram of common movie counts between $h$ and $h'$ (without integer programming)}
    \label{fig:cmovie-count-distrib-gan}
\end{figure}
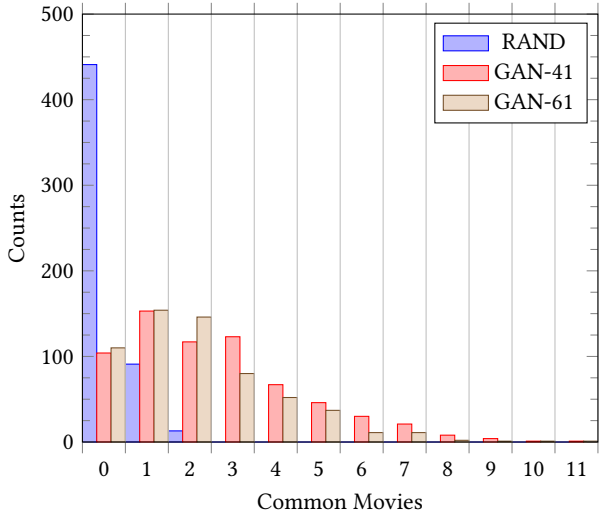

In \cref{fig:gan-output-example}, we show an example of a movie target history, a history generated by the GAN, and a history generated by our GAN-IP attack. The real history $h$ (on the left) is from the test set. We computed its SimHash and then generated a set of fake movie histories using the GAN. The history $h'$ in the middle is an example of such history. We gave this history as input to the IP-attack and then generated the history $\tilde{h}$ on the right. The SimHash of $\tilde{h}$ matches the SimHash of $h$ and 50\% of its movies are from $h$. In blue we show the movies that $h$ and $h'$ have in common.

\subsection{Discussion}

From the histograms we see that the random generation has very few movies in common with the target history. However, our GAN model evaluated at two different checkpoints has many more histories with a higher number of common movies. This is promising but only the tail of the distribution is on the higher counts, with a maximum of 9 common movies for one history of GAN-41 in \cref{fig:cmovie-count-distrib-ip}. In \cref{fig:cmovie-count-distrib-gan} both GAN-41 and GAN-61 have one history with 11 and 10 common movies with the target. Then GAN-41 has 4 histories with 9 common movies while GAN-61 still only has one. On average, the number of common movies with a target history is around 2 (see \cref{table:eval-common-movies}).

The histories generated by GAN-41 and filtered by integer programming have on average $12\%$ movies in common with the target histories. In around $28\%$ of the cases, the subset of movies selected by the IP attack matching the target SimHash reconstructs more than $10\%$ of the target history. Hence, the GAN-IP attack successfully breaks FLoC privacy claims and infers parts of the target user's history.\looseness=-1

Our attack can be amplified by performing it over a longer period of time. While the SimHash of users likely changes at every iteration (once per week), the majority of the browsing patterns remain. If our attack keeps generating the same movie over multiple runs, it increases the likelihood that the user watched the movie. Similarly, the union of the generated websites will more likely contain more movies that the user watched than just a single history.

Since Google runs the clustering algorithm, it is ideally suited to perform the GAN-IP attack. We therefore also note Google's capabilities that can make the attack more efficient. First, Google collects anonymized browsing histories of Chrome users that agreed with data collection in the Chrome User Experience Report. This gives them a significantly larger dataset of real browsing histories compared to the MovieLens dataset that we used. Second, Google has significantly more computational resources. Therefore, our results should be viewed only as a lower-bound of what a more powerful adversary can achieve.\looseness=-1

The generated history shares on average a non-negligible percentage of common movies with the unseen target history. The attack thus succeeds in revealing potentially sensitive information about the target user and, by extension, sensitive information about other users in the same cohort.

\section{MinHash Hierarchy}

In this section, we give some preliminaries on MinHash, a class of LSH (\cref{sec:background-minhash}). Afterwards, we present the MinHash Hierarchy system, a proposal for computing statistics on vehicles' trajectories (\cref{sec:application-minhash-hierarchy}). We then present our pre-image attack on the MinHash Hierarchy system (\cref{sec:attack-minhash-hierarchy}).

\subsection{MinHash} \label{sec:background-minhash}

MinHash is a type of LSH proposed by Broder~\cite{broder1997resemblance}. For a set of objects $\mathcal{X}$, MinHash estimates the similarity of subsets of $\mathcal{X}$. A MinHash is a function $h$ that maps each subset $X$ of $\mathcal{X}$ to a pseudo-random sequence $h(X) = (s_1, \ldots, s_n)$ of $n$ bitstrings. Usually, these bitstrings have 32 bits of length.
The function $h$ has the following property: for any $Y \subseteq \mathcal{X}$, the probability of $h(X) = h(Y)$ is the Jaccard similarity between $X$ and $Y$, i.e., $\frac{\left|X \cap Y\right|}{\left|X \cup Y\right|}$.

A MinHash function $h$ is composed of $n$ hash functions $h_i: \mathcal{X} \to \mathbb{N}$ and the MinHash of $X \subseteq \mathcal{X}$ is $h(X) = (s_1(X), \ldots, s_n(X))$, where $s_i(X) = \min_{x \in X} h_i(x)$.

A common choice for each $h_1, \ldots, h_n$ builds upon a hash function $\pi: \mathcal{X} \to \{0, \ldots, 2^{32} - 1\}$ that maps $\mathcal{X}$ to the set of 32-bitstrings. Then, for $i \leq n$, $h_i(x) = r \cdot \pi(x) + c \mod p$, where $r$, $c$, and $p$ are chosen uniformly at random from a sufficiently large interval of natural numbers and $p$ is a prime number larger than $\max \{\pi(x) : x \in \mathcal{X}\}$~\cite{broder1997resemblance}.

We illustrate the computation of a MinHash signature on a simple example. We define three hash functions $h_1(x) = x+3 \mod 5$, $h_2(x) = 2x+1 \mod 5$, and $h_3(x) = 3x+4 \mod 5$. Let $\mathcal{X} = \{0, 1, 2, 3, 4\}$. We now compute the MinHash signature $h(X)$ for the set $X=\{1, 4\}$. Note that $s_1(X) = \min\{h_1(1), h_1(4)\} = 2$, $s_2(X) = \min\{h_2(1), h_2(4)\} = 3$, and $s_3(X) = \min\{h_3(1), h_3(4)\} = 1$. Hence, the MinHash signature for the set $X$ is then $(2,3,1)$.

\subsection{Application of MinHash to Privacy}\label{sec:application-minhash-hierarchy}

In this section, we present the MinHash Hierarchy~\cite{minhashprivacy}, which is a proposal for computing statistics on mobile entities' trajectories. One example of such a statistic is the most popular route in the city. The MinHash Hierarchy can compute such statistics by placing cellular base stations, called \emph{checkpoints}, in a city and assigning a bitstring to each vehicle. Each checkpoint collects the set $X$ of bitstrings of the vehicles that pass nearby, using mobile devices stored in the vehicle. Afterwards, each checkpoint stores \emph{a MinHash signature}, which is the MinHash of $X$.

We mainly focus on the MinHash aspect of the MinHash Hierarchy and we therefore simplify its explanation.

\subsubsection{MinHash Signatures}

We present here how the MinHash signatures are computed. Let $n$ be the number of vehicles circulating in a city. First, $m \ll n$ checkpoints are distributed over the city. Then $k \in \mathbb{N}$ hash functions $h_1, \ldots, h_k$ are fixed. The recommendation is to let $h_i(x) = ax + b \mod p$, with $i \leq k$, $a, b, p \in \mathbb{N}$, and $p > n$ prime, as shown before. However, if needed, cryptographic one-way functions can be used instead. 

Each checkpoint maintains a MinHash signature $s = (s_1, \ldots, s_k)$ so that, at any time, $s$ is the MinHash of the set of vehicles that passed by the checkpoint so far. To ensure this, $s_i$ is initially set to $\infty$, for $i \leq k$, as the MinHash of the empty set is $(\infty, \ldots, \infty)$. Next, whenever a vehicle whose assigned bitstring is $x$ passes by the checkpoint, $s_i$ is updated to $\min(s_i, h_i(x))$, for $i \leq n$.

With the checkpoints' MinHash signatures, Ding et al.~\cite{minhashprivacy} proposed the MinHash Hierarchy to efficiently perform popular path queries, such as finding the most frequented roads in a city during a given time interval. The process uses intersection and union operations defined for MinHash signatures of checkpoints to estimate the Jaccard similarities. Our attack focuses on the MinHash signatures and should work irrespective of the operations used to derive a given MinHash signature.

\subsubsection{Wrong Differential Privacy Claim}\label{sec:DP-counterexample} Ding et al.~\cite{minhashprivacy} claim in Theorem 5.1 that the MinHash Hierarchy provides differential privacy for the vehicles. We show that this claim is wrong. We start by recalling the definition of differential privacy.
An algorithm $A$ is $\epsilon$-differentially private if for any of $A$'s possible outputs $O$ and for all databases $D_1$ and $D_2$ that differ in only one individual~\cite{dwork2006calibrating}:

\begin{equation}
    P[A(D_1) = O] \le e^{\epsilon} \cdot P[A(D_2) = O].
    \label{eq:dp_minhash}
\end{equation}

In our context, a database $D$ is a set of \emph{trajectories}, each individual is a trajectory, and the Algorithm $A$ is the one used by a checkpoint to compute its MinHash signature. For simplicity and without loss of generality, we can assume a MinHash length of $k=1$; so there is only one single hash function $h$.

We refute Theorem 5.1 from~\cite{minhashprivacy} with the following counterexample. Let $D_1=\{t_1, \dots, t_n\}$ and let $D_2 = D_1 \setminus \{t_n\}$. Suppose that $h(t_1) > \dots > h(t_n)$. Therefore, $A(D_1) = h(t_n) < A(D_2)$, as $t_n \notin D_2$. Hence, $P[A(D_1) = h(t_n)] = 1$ whereas $P[A(D_2) = h(t_n)] = 0$. Since $e^\epsilon > 0$, for any $\epsilon \in \mathbb{R}$, Equation~\ref{eq:dp_minhash} cannot hold when $O = h(t_n)$.

\section{Attacks on MinHash Hierarchy}\label{sec:attack-minhash-hierarchy}

Our counterexample in \cref{sec:DP-counterexample} demonstrates that an attacker  with side knowledge can tell if a particular vehicle passed through a particular checkpoint. However, it does not tell us how much information it leaks in practice. Therefore, in this section, we present an attack breaking the privacy properties of the MinHash Hierarchy system that can be used directly to narrow down the area in which a vehicle traveled. In our experiments, we narrowed down the potential trajectory area to 10\% of the total area (in the number of checkpoints). 
The attack is illustrated in \cref{fig:minhash-attack}.

\begin{figure} 
    \centering
    \includegraphics[width=\columnwidth]{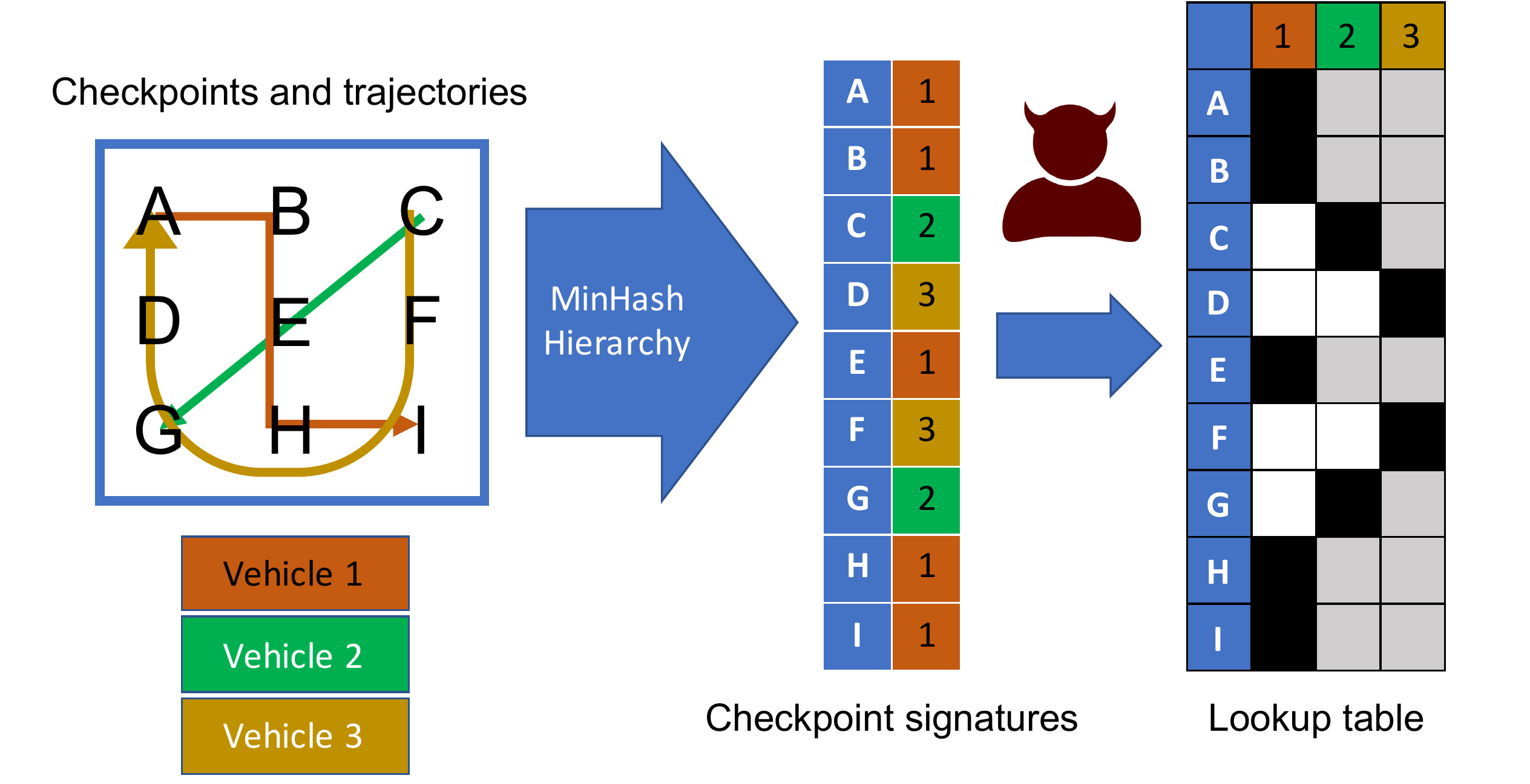} 
    \caption{How an attacker extracts private information from the MinHash Hierarchy. Left: Checkpoints A--I located in a grid and three vehicles' trajectories. Right: The checkpoints' signatures (we assume only one hash function: the identity function). The attacker computes a partial lookup table that says for each vehicle and each checkpoint, whether the vehicle passed or not by that trajectory (black: passed, white: not passed, gray: unknown). The lookup table reveals some checkpoints that were visited by some vehicles.}
    \label{fig:minhash-attack}
\end{figure}

\textbf{Attacker model} We assume that the attacker wants to infer the trajectories of the vehicles whose data is collected by the MinHash Hierarchy. We also assume that the attacker can access each checkpoint's signature, knows the hash functions used to compute the signatures, and can efficiently compute collisions for them. 

Note that the MinHash Hierarchy fulfills the requirements above. Furthermore, the hash functions used by the MinHash Hierarchy are just permutations. Due to this implementation choice, we can not only compute signature collisions, but we can also invert the permutation, extracting the user identifier. Should it instead use cryptographic hash functions with a large hash length like 256 bits, it would still be possible to precompute a look-up table with the hashes of all users. This is because the input space is the set of all mobile users and the cardinality of this user set is small.

We now present our attacks. Suppose that we are given a vehicle identified with bitstring $v$ and let $z = (z_1, \ldots, z_n)$ with $z_i = h_i(v)$, for $i \leq n$, be \emph{the signature of $v$}. Let $C$ be the set of checkpoints in a geographical area. For a checkpoint $c \in C$, we denote its signature with $s(c) = (s_1(c), \ldots, s_n(c))$. We use Algorithm~\ref{algo:divide_checkpoints} to partition $C$ into three subsets $B_z$ (black), $G_z$ (gray), and $W_z$ (white). $W_z$ denotes all checkpoints $c$ such that $z_i < s_i(c)$, for some $i \leq n$. Note that this condition means that the vehicle is not in the set of vehicles that passed through $c$. Otherwise, $s_i(c) \leq z_i$. $B_z$ contains all the points not in $W_z$ such that $z_i = s_i(c)$, for some $i \leq n$. Note that if $c \in B_z$, then it is very likely, except for a rare hash collision, that the vehicle passed through $c$. Finally, $G_z$ contains all other checkpoints in $C$: checkpoints not in $W_z$ for which $z_i < s_i(c)$, for all $i \leq n$. Note that if $c \in G_z$, then it is still likely that the vehicle passed through $c$, but not as likely as if $c \in B_z$.

\begin{algorithm}
\caption{Attack on MinHash Hierarchy}
\label{algo:divide_checkpoints}
\begin{algorithmic}[1] 
\State{$W_z \gets \emptyset$}
\For{$i = 1, \ldots, n$}
    \For{$c \in C$}
        \If{$z_i = s_i(c)$}
            \State{$B_{z} \gets B_{z} \cup \{c\}$}
        \EndIf
        \If{$z_i < s_i(c)$}
            \State{$W_z \gets W_z \cup \{c\}$}
        \EndIf
    \EndFor
\EndFor
\State{$B_z \gets B_z \setminus W_z$}
\State{$G_z \gets C \setminus (B_z \cup W_z)$}
\State{\textbf{return} $W_z, G_z, B_z$}
\end{algorithmic}
\end{algorithm}

\begin{theorem}\label{thm:ckpt-traj}
Let $v$ be a vehicle with signature $z$ and let $c$ be a checkpoint. 

\begin{itemize}
    \item If $c \in W_z$ then $v$ cannot have passed through $c$.
    \item If $c \in B_z$ then $h_i(v) = h_i(v')$, where $i \leq n$ and $v'$ is some vehicle that passed through $c$.
\end{itemize}
\end{theorem}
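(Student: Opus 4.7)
The plan is to prove both bullets by directly unfolding the definition of the MinHash signature $s_i(c) = \min_{x \in X_c} h_i(x)$, where $X_c$ denotes the set of vehicles that passed through checkpoint $c$, and comparing it with the coordinate conditions that define the partition $W_z, B_z, G_z$ produced by \cref{algo:divide_checkpoints}. The single observation underpinning both statements is that if $v \in X_c$, then $h_i(v)$ is one of the values over which $s_i(c)$ is minimized, so $s_i(c) \leq h_i(v) = z_i$ simultaneously for every coordinate $i \leq n$.

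For the first bullet I would argue by contrapositive. Suppose $v$ did pass through $c$, i.e. $v \in X_c$. The observation above yields $s_i(c) \leq z_i$ for all $i$, which directly violates the defining property of $W_z$, namely the existence of some $i$ with $z_i < s_i(c)$. Hence $c \notin W_z$, which is the contrapositive of the claim.

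For the second bullet I would unfold the definition of $B_z$ to extract an index $i \leq n$ with $z_i = s_i(c)$. Since $X_c$ is finite and nonempty (otherwise $s_i(c) = \infty$ and the equality to $z_i$ would fail), the minimum $\min_{x \in X_c} h_i(x) = s_i(c)$ is attained by some $v' \in X_c$. Then $h_i(v') = s_i(c) = z_i = h_i(v)$, which is exactly the claimed collision on the $i$-th coordinate.

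I do not expect any genuine obstacle; the argument is essentially definitional. The only point worth flagging explicitly in the write-up is that the second bullet does not assert $v = v'$ or even $v \in X_c$: it only certifies a hash collision on one coordinate, which (absent a rare collision of the underlying permutation $h_i$) is strong circumstantial evidence that $v$ did visit $c$. Preserving this caveat keeps the statement honest and aligns with the earlier discussion distinguishing ``black'' (very likely visited) from ``gray'' (still plausibly visited) checkpoints.
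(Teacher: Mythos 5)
Your proof is correct and follows essentially the same route as the paper's: both bullets are handled by unfolding $s_i(c)=\min_{v'\in V_c} h_i(v')$, with the first claim proved by noting that membership in $V_c$ would force $s_i(c)\leq z_i$ (your contrapositive is just the paper's contradiction rephrased), and the second by observing the minimum is attained by some $v'\in V_c$. The extra care about $V_c$ being nonempty is a harmless refinement the paper leaves implicit.
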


\begin{proof}
If $c \in W_z$, then $z_i < s_i(c)$, for some $i \leq n$. Recall, by the definition of MinHash, $s_i(c) = \min\left\{h_i(v') : v' \in V_c\right\}$, where $V_c$ denotes all vehicles that passed through $c$. Hence, $v \notin V_c$; otherwise, $s_i(c) \leq h_i(v) = z_i$, which is a contradiction.

For the second claim, note that if $c \in B_z$, then $z_i = s_i(c)$, for some $i \leq n$. By the definition of MinHash, we have $h_i(v) = z_i = s_i(c) = h_i(v')$, for some vehicle $v' \in V_c$.
\end{proof}

We emphasize that the hash functions used by MinHash Hierarchy are not collision-resistant. Even if they use cryptographic hash functions with a large length, note that the set of mobile users is relatively small enough so that one can precompute a look-up table with the hashes of all users. Therefore, if $c \in B_z$, then $v$ is likely to have passed through $c$.

Observe that $G_z \cup B_z$ describes all possible checkpoints the vehicle could have visited. In our experiments, we found that in average $G_z \cup B_z$ contains only around 10\% of all checkpoints in $C$.

To illustrate this attack, consider a scenario with 20 vehicles' trajectories, 5 checkpoints and 2 hash functions. The vehicle MinHash is $z=(z_1, z_2)=(9, 11)$. We compare each checkpoint's signature entries to the corresponding vehicle hash. 
\begin{table}
\centering 
\caption{Example signatures for several vehicles and checkpoints \label{tab:ckpt-sig}}
\begin{tabular}{ c | c | c | c }
\toprule
     $c$ & $\left(s_1(c), s_2(c)\right)$ & $v$ & $\left(z_1(v), z_2(v)\right)$  \\ %
     \midrule
     $c_1$ & $(8, 12)$ & $v_1$ & $(9, 11)$ \\  
     $c_2$ & $(6, 3)$ & $v_2$ & $(2, 8)$ \\  
     $c_3$ & $(2, 7)$ & $v_3$ & $(12, 13)$ \\  
     $c_4$ & $(4, 11)$ & $v_4$ & $(7, 10)$ \\  
     $c_5$ & $(11, 5)$ & $v_5$ & $(5, 18)$ \\
\bottomrule
\end{tabular}
\end{table}
Considering vehicle $v_1$ and the 5 checkpoints in \cref{tab:ckpt-sig}, our attack returns $W_z = \{c_1, c_5\}$ because at least one of the hashes is greater in the checkpoint signature. $B_z = \{c_4\}$ because the checkpoint is not in $W_z$ and at least one hash is equal. $G_z = \{c_2, c_3\}$ contains the remaining checkpoints. 

\subsection*{Identifying Vehicles}

Algorithm~\ref{algo:divide_checkpoints} takes as input a trajectory and identifies the checkpoints that could have been visited in that trajectory. It is also possible to modify this algorithm so that the input is a checkpoint and the output is the subset of vehicles from a set $V$ that potentially visited that checkpoint. The result is Algorithm~\ref{algo:checkpoint_calc}. This algorithm produces, from a given checkpoint $c$, three sets of vehicles: $W_c$, containing the vehicles that certainly did not pass through $c$; $B_c$, the vehicles that most likely passed through $c$ (except in the rare case of a hash collision); and $G_c$, containing the remaining vehicles. For example, suppose that we run this algorithm with checkpoint $c_4$ as input and with $V$ as the 5 vehicles listed in \cref{tab:ckpt-sig}. Then $W_c=\{v_2, v_4\}$ since $z_1(v_2) < s_1(c_4)$ and $z_2(v_4) < s_2(c_4)$. $B_c=\{v_1\}$ since $z_2(v_1) = s_2(c_4)$. Finally, $G_c=\{v_3, v_5\}$ contains the remaining checkpoints.\looseness=-1

\begin{algorithm}
\caption{Estimating vehicles passing through $c$}
\label{algo:checkpoint_calc}
\begin{algorithmic}[1] 
\State{$W_c \gets \emptyset$}
\For{$i = 1, \ldots, n$}
    \For{$v \in V$}
        \State{Compute $z = (z_1, \ldots, z_n)$ with $z_i = h_i(v)$}
        \If{$z_i = s_i(c)$}
            \State{$B_{c} \gets B_{c} \cup \{v\}$}
        \EndIf
        \If{$z_i < s_i(c)$}
            \State{$W_c \gets W_c \cup \{v\}$}
        \EndIf
    \EndFor
\EndFor
\State{$B_c \gets B_c \setminus W_c$}
\State{$G_c \gets V \setminus (B_c \cup W_c)$}
\State{\textbf{return} $W_c, G_c, B_c$}
\end{algorithmic}
\end{algorithm}

\begin{theorem}\label{thm:vhcl-traj}
Let $v$ be a vehicle. 

\begin{itemize}
    \item If $v \in W_c$ then $v$ cannot have passed through $c$.
    \item If $v \in B_c$ then $h_i(v) = h_i(v')$, for some $i \leq n$ and $v'$ some vehicle that passed through $c$.
\end{itemize}
\end{theorem}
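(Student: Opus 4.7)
The plan is to mirror the proof of \cref{thm:ckpt-traj}, observing that Algorithm~\ref{algo:checkpoint_calc} is structurally dual to Algorithm~\ref{algo:divide_checkpoints}: the inner comparison between $h_i(v)$ and $s_i(c)$ is identical, only the roles of the fixed object and the varying object are exchanged. Consequently, I expect both bullet points to reduce to essentially the same case analysis used before, with no new ingredient needed beyond the definition of MinHash as an elementwise minimum over the set $V_c$ of vehicles that actually passed through $c$.

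For the first bullet, I would argue by contradiction. Assume $v \in W_c$ and, toward a contradiction, $v \in V_c$. By construction of $W_c$ in Algorithm~\ref{algo:checkpoint_calc}, there exists some $i \leq n$ with $h_i(v) < s_i(c)$. But by the definition of MinHash, $s_i(c) = \min\{h_i(v'') : v'' \in V_c\}$, so $v \in V_c$ would force $s_i(c) \leq h_i(v)$, contradicting the strict inequality. Hence $v \notin V_c$.

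For the second bullet, I would unfold the definitions directly: if $v \in B_c$, then at some iteration of Algorithm~\ref{algo:checkpoint_calc} the test $h_i(v) = s_i(c)$ succeeded for some $i \leq n$. Using once more that $s_i(c) = h_i(v')$ for the vehicle $v' \in V_c$ attaining the minimum, we obtain $h_i(v) = h_i(v')$. The only subtlety I anticipate is bookkeeping: membership in $B_c$ is only meaningful after the final line $B_c \gets B_c \setminus W_c$, so the proof must be phrased in terms of \emph{some} coordinate $i$ where equality is witnessed, not in terms of equality holding on all coordinates. Aside from this, the statement is a direct transcription of \cref{thm:ckpt-traj} with the roles of vehicles and checkpoints swapped, and I do not expect any obstacle beyond this dualization.
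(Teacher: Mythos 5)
Your proposal is correct and matches the paper, which simply states that the proof is analogous to that of \cref{thm:ckpt-traj}; your dualization argument is exactly the intended one, with the same contradiction via $s_i(c)=\min\{h_i(v''):v''\in V_c\}$ for the first bullet and the same unfolding for the second.
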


The proof is analogous to the previous one. Observe again, that for MinHash Hierarchy, if $v \in B_c$, then $v$ is likely to have passed through $c$ as one can easily precompute a look-up table with the hashes of all vehicles. This theorem shows that we can narrow the set of vehicles that passed through $c$ to the set $G_c \cup B_c$.

\section{Attack implementation for the MinHash Hierarchy}

In this section, we experimentally validate that our attack on MinHash Hierarchy substantially narrows down the set of possible checkpoints visited by a vehicle to approximately only 10\% of all checkpoints in the area.

\subsection{Dataset} 

The dataset used in the original paper~\cite{minhashprivacy} is not publicly available. We thus used another public dataset of vehicle trajectories~\cite{moreira2013predicting} for the city of Porto, Portugal. Each entry in the dataset defines a vehicle trajectory. The trajectory is described as a list of points, where each point is a pair containing the latitude and longitude of the taxi at a given time point.

\subsection{Methodology}

For our experiments, we take the first $n=30\,000$ trajectories in the Porto dataset. As some trajectories contain points that are far outside the city, we removed all points containing an extreme latitude or longitude. We defined a latitude as extreme if it was below 2\% or above 98\% of all latitudes in these trajectories. We defined a longitude as extreme analogously. We then created a set $C$ of $m=7744$ checkpoints by fitting an $88 \times 88$ square grid on all points in these trajectories.

To generate the MinHash signature for a vehicle, we compute the MinHash signature of a singleton set containing only the identifying number of the vehicle (taken in the $\{1,\ldots, 30\,000\}$ range) using $k=200$ hash functions. 
We then computed the checkpoints' signatures from the vehicles passing by, assuming every trajectory belongs to a different vehicle. Each vehicle's GPS coordinates in its trajectory generates one update for the closest checkpoint. Finally, we run our attack on MinHash Hierarchy and for the MinHash $z$ of each vehicle, we compute the sets $B_z$, $G_z$, and $W_z$. We then measure $A_z := \left|G_z \cup B_z\right|/\left|C\right|$, the ratio of checkpoints that our attack identifies as possibly visited by the vehicle to the total number of checkpoints. The quality of our attack is measured by how low $A_z$ is on average for all vehicles we tested. $A_z$ is around $10\% \pm 5\%$, showing that on average, we narrow down the set of checkpoints visited by the vehicle to only 10\% of all checkpoints in the map.

We execute this attack 5 times. Each time, we use a separate set of $30\,000$ different trajectories. 

\subsection{Results}

\cref{fig:all-ckpt-traj-pdf} shows a heatmap with $30\,000$ trajectories. Each pixel is a checkpoint and its brightness is proportional to the number of vehicles that visited that checkpoint.

\begin{figure} 
    \centering
    \includegraphics[width=\linewidth]{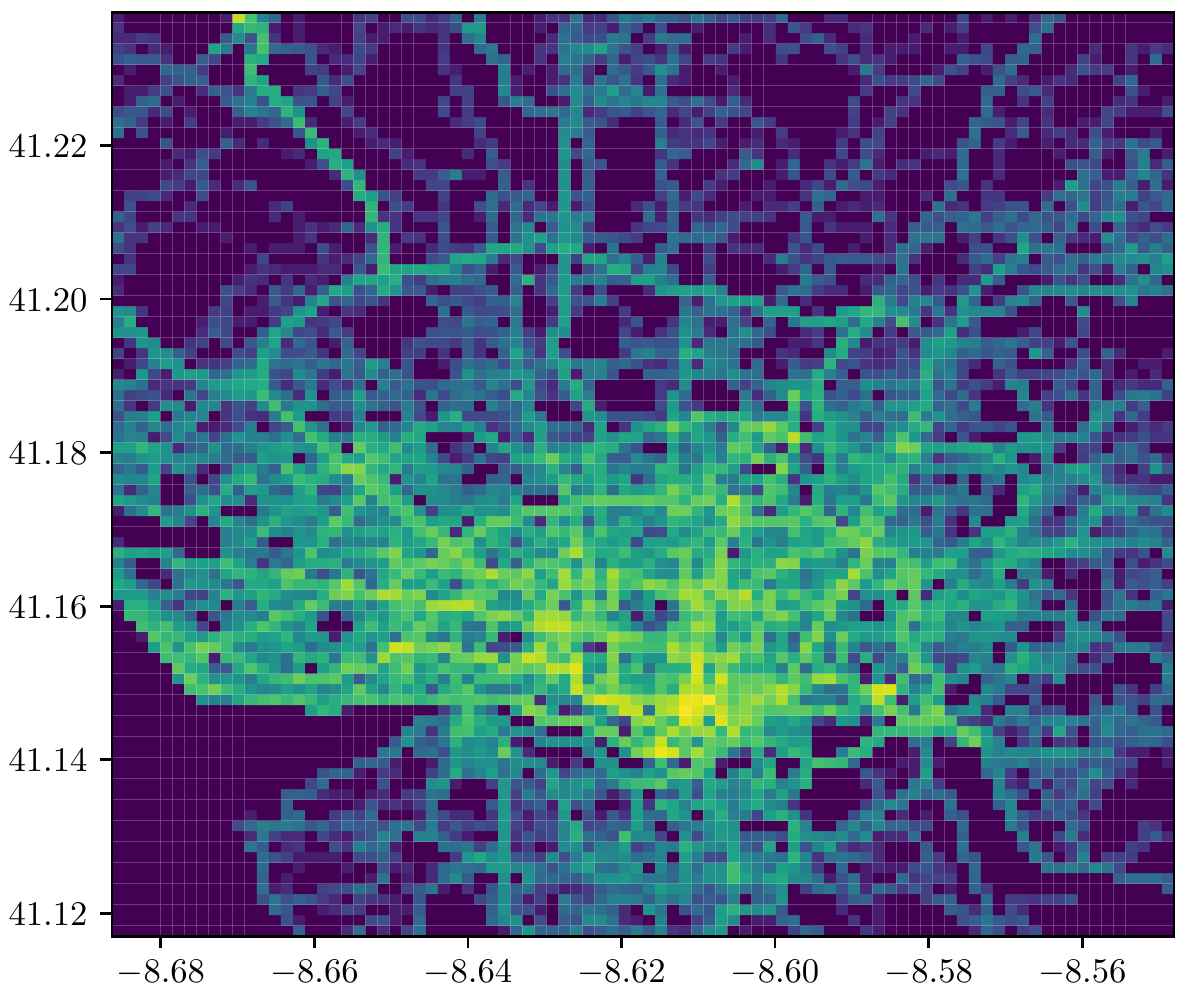} 
    \caption{Selected trajectories of checkpoint 2D histogram}
    \label{fig:all-ckpt-traj-pdf}
\end{figure}

\cref{subfig:good-traj-target} and \cref{subfig:bad-traj-target} show two example trajectories, using the square grid from \cref{fig:all-ckpt-traj-pdf}. The checkpoints visited by the vehicle are in black. \cref{subfig:good-traj-recov} and \cref{subfig:bad-traj-recov} show the outcome of our attack for these two trajectories, respectively. The checkpoints in $B_z$, $G_z$, and $W_z$ are marked black, gray, and white, respectively. 
For the recovery part, we note that trajectory B (28 checkpoints) is hidden within other trajectories (2018 checkpoints). However, some checkpoint signatures had values equal to the vehicle signature (16 checkpoints), and therefore those checkpoints are very likely to be part of the trajectory. For trajectory A, the attack can isolate the target (27 checkpoints) even more, narrowing it to only two possible trajectories. More trajectories are included in \cref{fig:example-traj} for illustration. These trajectories illustrate how, from only the checkpoints' MinHash signature, our attack can either accurately retrieve the target trajectories or restrict it to a much smaller area, thereby compromising the users' privacy.\looseness=-1

\begin{figure}
    \centering
        \subcaptionbox{Target A.\label{subfig:good-traj-target}}{\includegraphics[scale=0.32]{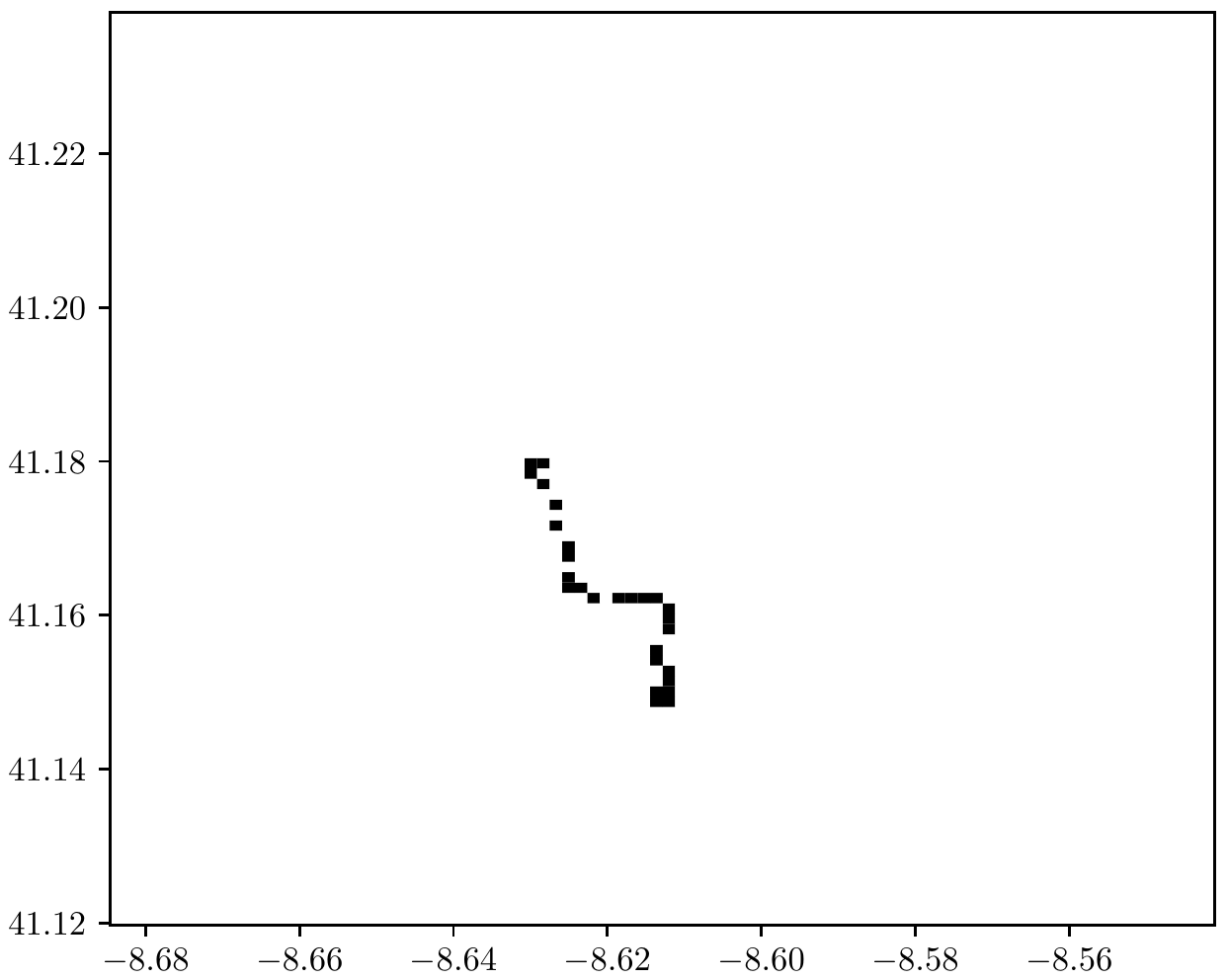}}
        \subcaptionbox{Recovered A.\label{subfig:good-traj-recov}}{\includegraphics[scale=0.32]{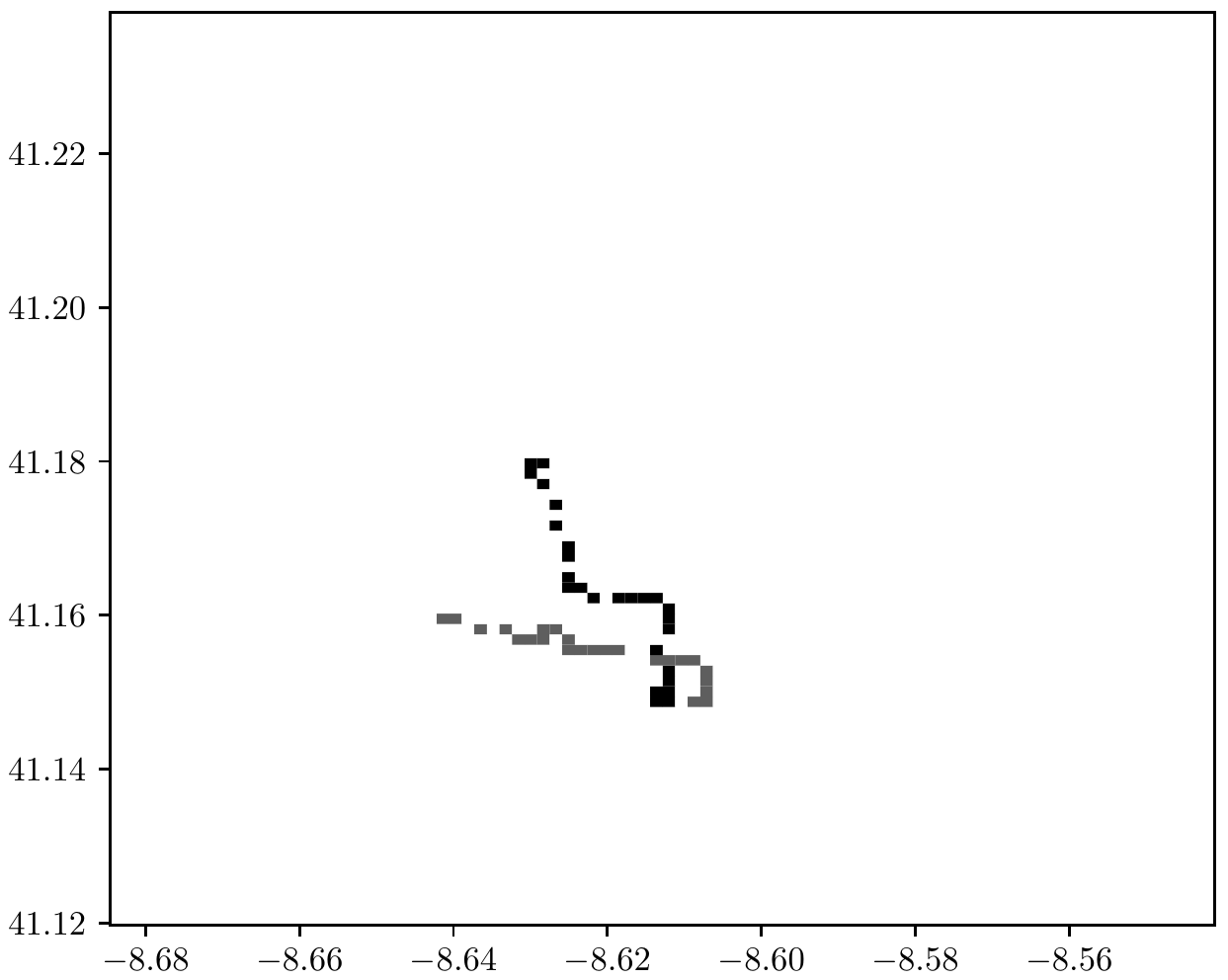}} \\
        \subcaptionbox{Target B.\label{subfig:bad-traj-target}}{\includegraphics[scale=0.32]{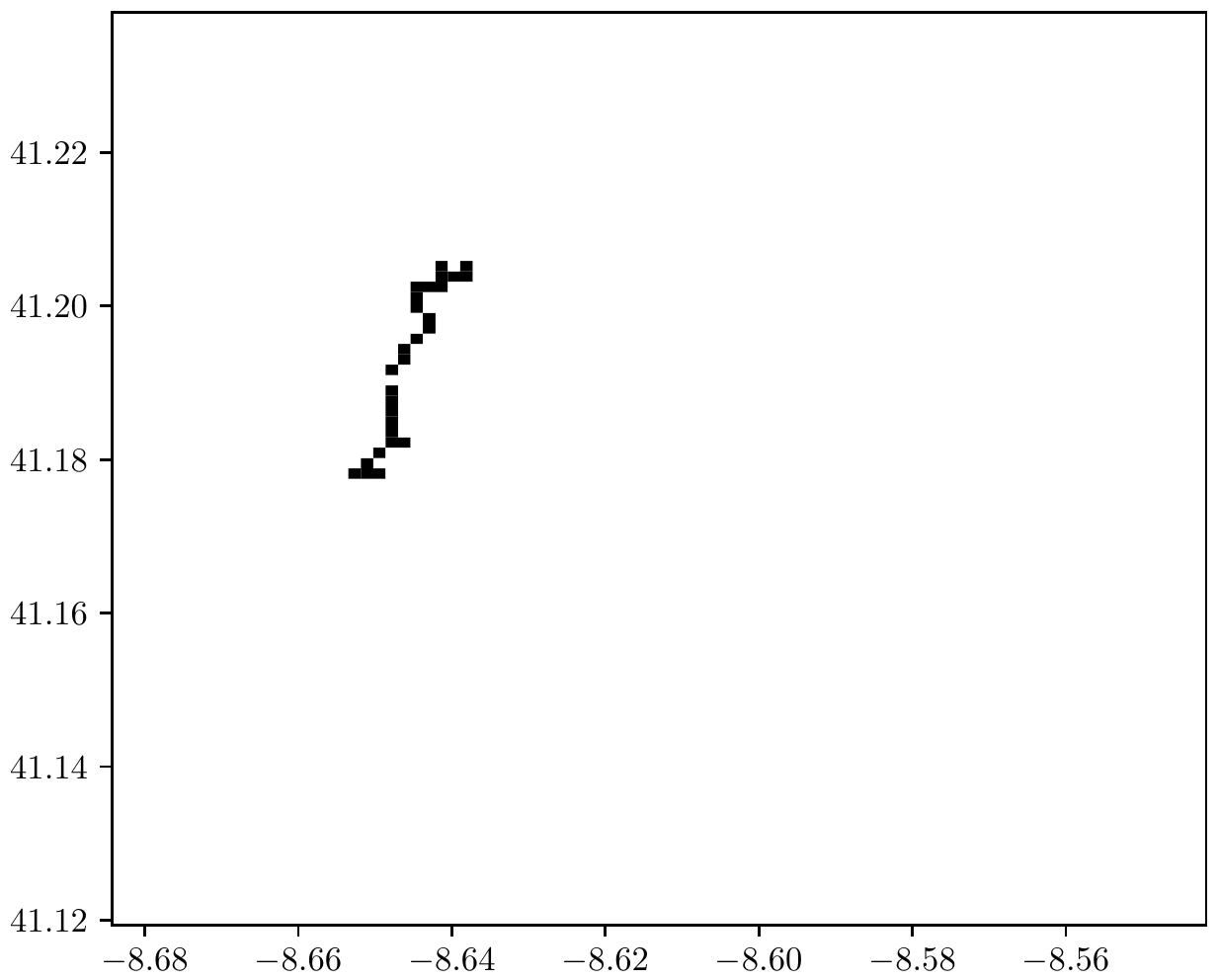}}
        \subcaptionbox{Recovered B.\label{subfig:bad-traj-recov}}{\includegraphics[scale=0.32]{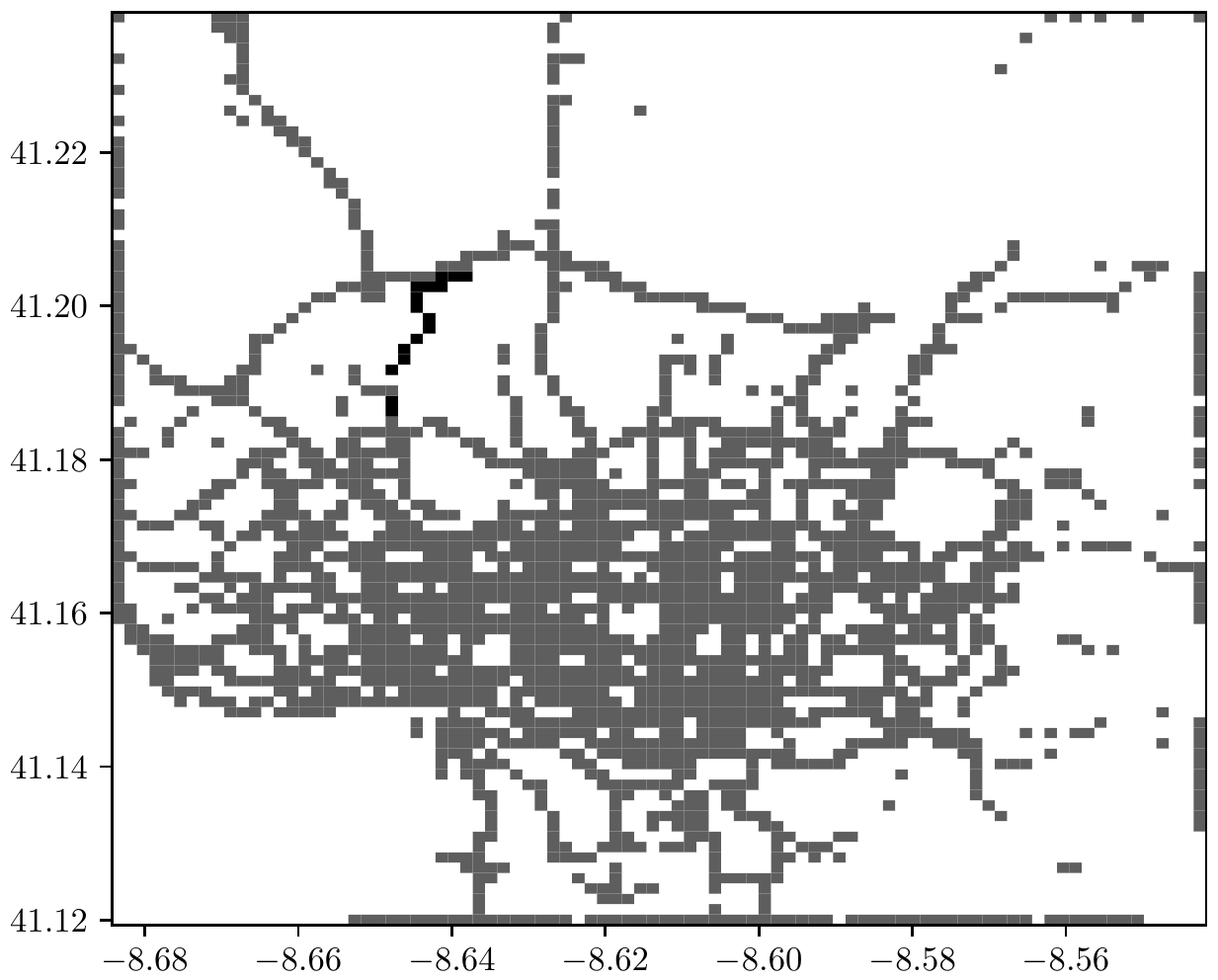}} \\
        \subcaptionbox{Target C.\label{subfig:traj-target3}}{\includegraphics[scale=0.32]{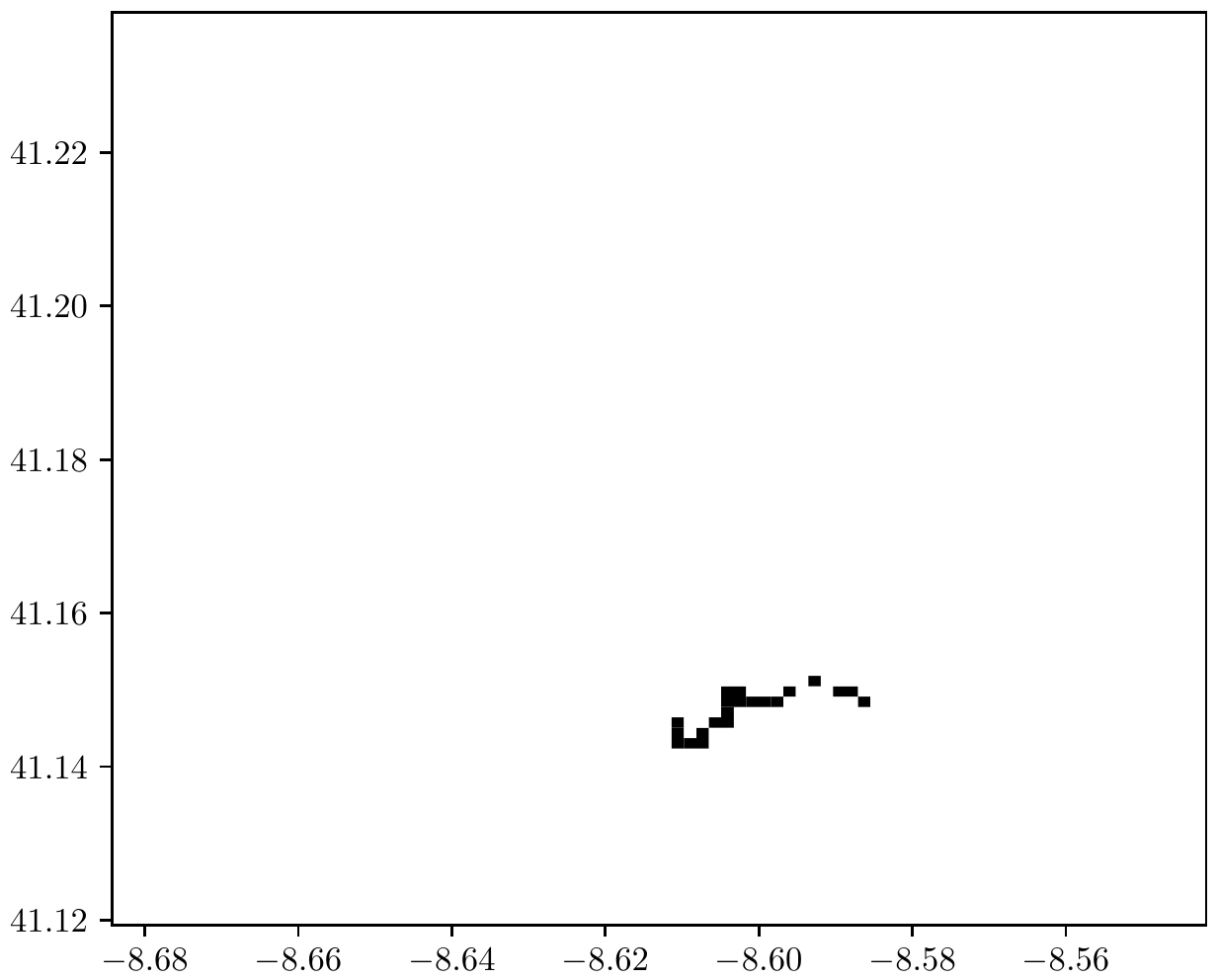}}
        \subcaptionbox{Recovered C.\label{subfig:traj-recov3}}{\includegraphics[scale=0.32]{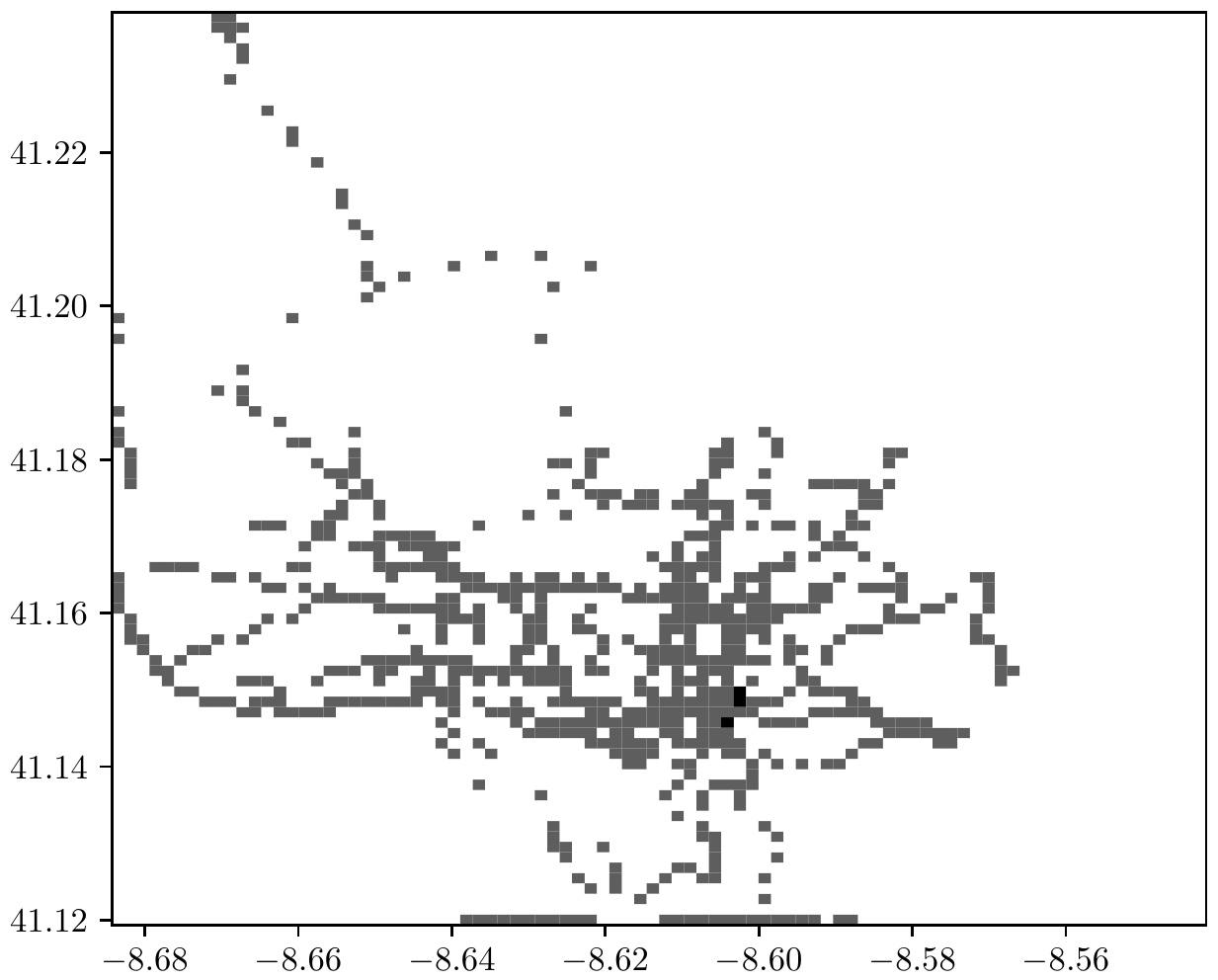}} \\
        \subcaptionbox{Target D.\label{subfig:traj-target4}}{\includegraphics[scale=0.32]{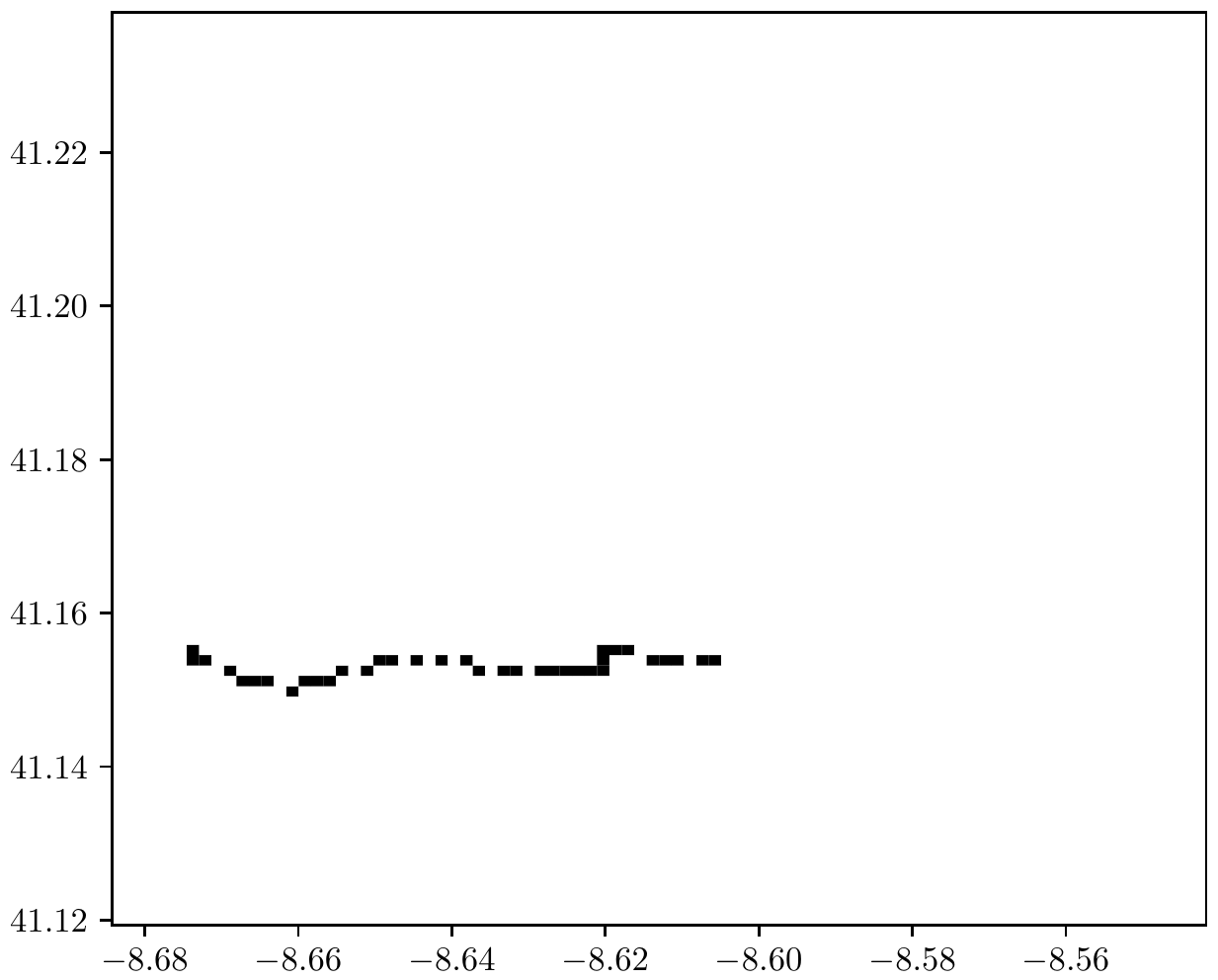}}
        \subcaptionbox{Recovered D.\label{subfig:traj-recov4}}{\includegraphics[scale=0.32]{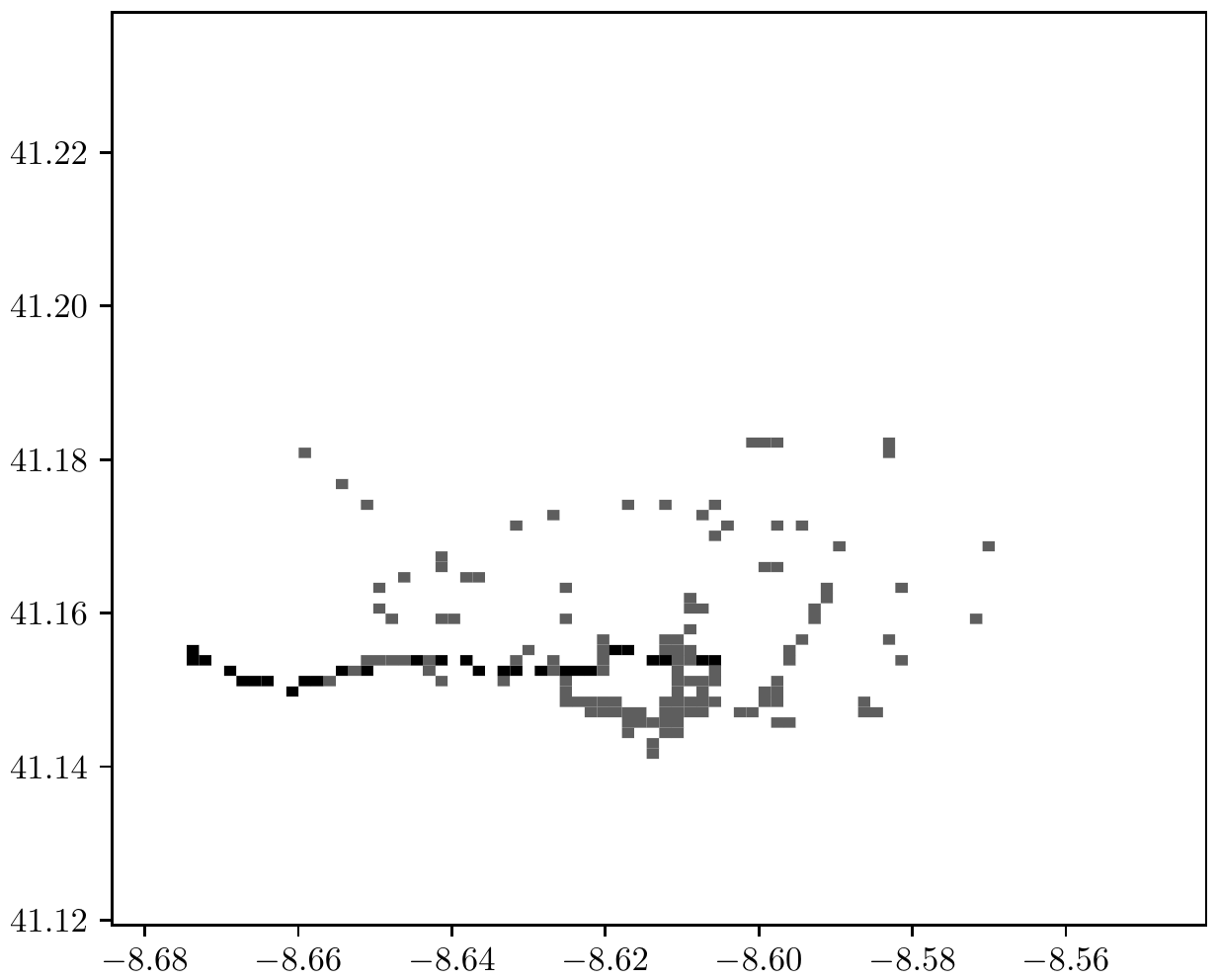}} \\
    \caption{Example of target and recovered trajectories}
    \label{fig:example-traj}
\end{figure}

In our dataset with $30\,000$ trajectories, a trajectory has in average $25.9 \pm 15.6$ checkpoints. A set of checkpoints found by our attack has on average $805.5 \pm 439.7$ checkpoints. Recall that the total number of checkpoint is 7744. Hence on average, we can reduce the set of possible checkpoints visited by a vehicle to around $10\%$ of the original set of checkpoints. This means that in a city like Porto we would restrict the trajectory to a neighborhood. Our attack breaks the MinHash Hierarchy's claimed privacy protection and shows how to confine the target trajectory to a small portion of the map.

\subsection{Discussion}

In contrast to the claims from Ding et al., we show that it is possible to isolate trajectories from checkpoint signatures with good accuracy, isolating a vehicle's potential trajectory to 10\% of the checkpoints on average.

If a checkpoint $c$'s MinHash signature contains a vehicle's hash, then we are certain (modulo the negligible probability of a hash collision) that the vehicle visited this checkpoint. Each checkpoint signature has 200 hash functions. We can therefore deanonymize up to 200 vehicles. In the city center, due to higher density of people, we can deanonymize a lower fraction of vehicles, while in a rural area we might be able to deanonymize all vehicles. Consequently, discarding trajectories with extreme latitudes or longitudes decreases the effectiveness of our attack, as our attack performs better for rural areas. This preprocessing was necessary to handle the given dataset.

As a post-processing step, it is possible to further filter the set $G_z$ based on time constraints, common commuting patterns, and other external information. We did not explore these techniques since we focus on showing the information leakage stemming from the application of LSH to sensitive data.

\section{Insufficiency of Countermeasures}\label{sec:mitigations} 

We now discuss some mitigations that have been proposed to address the privacy flaws in LSH systems. In particular, we argue that popular countermeasures like differential privacy are insufficient to prevent our attacks and that future work must avoid LSH systems as a way to offer privacy or develop stronger methods to provide privacy for these systems.

\textbf{FLoC.} For the FLoC proposal, one option is to make the Chrome operated server, which receives the SimHash and assigns the cohort IDs, trusted and unable to read sensitive information. This can be achieved, for example, by a trusted third party. However, this would prevent neither the Sybil nor the GAN-IP attack. This is because any party can observe a user's cohort ID, which is assigned based on a prefix of the user's SimHash. The Sybil attack can generate Sybil browsing histories that are mapped to that cohort. As the number of these Sybil histories grow, the users' cohort ID would become longer and reveal more of the user's SimHash, which enables the GAN-IP attack.

Recent work~\cite{google-practical-dp} proposes using differentially private clustering to compute the cohort IDs. It adds noise to each individual SimHash and then computes from the set $Z$ of all SimHashes a new smaller set $Z'$ of SimHashes, called a \emph{coreset}. Each SimHash $z$ in the coreset acts as a ``representative'' of a subset $S_{z}$ of SimHashes in $Z$ that are close to each other. The SimHash $z$ is computed using an additive-noise mechanism and comes with a positive number that approximately indicates the size of $S_{z}$. As a result, one cannot infer individual SimHashes in $Z$ from $z$.\looseness=-1

The use of coresets prevents us from conducting the pre-image attack on SimHashes from real individuals, as we cannot retrieve them. However, the SimHashes in the coreset are still vulnerable to the other attacks. In particular, we can still perform the Sybil attack on a SimHash in the coreset so that a SimHash there eventually represents mostly Sybil users. This would then isolate real users and we can then conduct the GAN-IP attack to infer parts of the browsing history of those users.

We argue that the best mitigation is a design of a new system that builds on differential privacy rather than $k$-anonymity to prevent Sybil attacks. This new system should also avoid leaking information in the LSH hashes. The new proposal of the Topics API~\cite{Topics-github} appears to satisfy both of these requirements but its implementation must still be formalized to allow a thorough evaluation of its privacy guarantees.

\textbf{MinHash Hierarchy.} For the MinHash Hierarchy, the use of differential privacy would provide guarantees about how much information any attack can extract. For example, with a low probability, if a checkpoint would compare its aggregated hash value with a random value instead of the current vehicle hash, it would lower the precision of our attack and give plausible deniability to vehicles.\looseness=-1

Recent works propose differentially-private versions for MinHash like PrivRec~\cite{zhang2020privrec} and PrivMin~\cite{PrivMin}. However, these proposals provide privacy only for the individual MinHashes and not for systems that process collections of MinHashes, like MinHash Hierarchy. Recall that each checkpoint computes the MinHash of the IDs of mobile devices that pass near the checkpoint. Using PrivMin or PrivRec on each checkpoint provides differential privacy, but only for an \emph{individual} checkpoint. One must still demonstrate that PrivMin and PrivRec's DP guarantees can tolerate the computations that MinHash Hierarchy conducts using the MinHashes from multiple checkpoints.

Systems that process sensitive information from users must protect their privacy. Current systems based on LSH can provide stronger guarantees if they are enhanced with appropriate differential-privacy mechanisms, but the state of the art in differential privacy is still unable to provide this. Our work demonstrates the need for novel solutions that provide better privacy protections for LSH-based systems.

\section{Related Work}

\textbf{Attacks on FLoC.} Berke et al.~\cite{berke2022privacy} emulated the FLoC algorithm producing cohorts over time, using a proprietary (paid) demographic and browsing history dataset. They then attacked the algorithm using the uniqueness of browsing histories over time and tracking sequences of FLoC IDs. They could identify 95\% of user's devices after 4 weeks. Combining this attack with standard fingerprinting techniques would make it even more effective. In addition, with the observed data, they could connect users' racial backgrounds to their browsing histories, in spite of the fact that they found no direct connection between race and cohorts. Berke et al.'s work focuses on the tracking of individual users and the correlation between cohorts and sensitive demographics. They show that FLoC enables the tracking of individual users, which is an alternative to our Sybil attack. However, they do not reconstruct users' browsing histories as our GAN-IP attack does. Furthermore, our attack also works without the need of collecting data over a long period, which is a requirement for the attack of Berke et al.

Mozilla also mentioned in their report~\cite{floc-mozilla-report} that FLoC was vulnerable to Sybil attacks. However, their claims were neither formally verified nor experimentally validated. Given that FLoC was only tested during a trial with limited user participation, the majority of attacks remained theoretical with no practical implementation. Our work not only gives a theoretical analysis, but also provides a practical implementation and an experimental evaluation using real datasets.

\textbf{Attacks on Perceptual Hashing and NeuralHash.} Another type of LSH is \emph{perceptual hash}, which is used for images. A perceptual hash is 
a fingerprint computed from an input image. It is possible to mount pre-image attacks using conditional adversarial GANs (cGANs), like Pix2Pix~\cite{isola2018imagetoimage}. Such GANs learn how to translate images in one style to another (e.g., translating a hand-drawn sketch of a bag to a photo of a bag). The attack trains a cGAN that learns to translate perceptual hashes to possible pre-images with a success rate of 30\%~\cite{GAN-perceptual-img-hash}. This makes perceptual hash unsuitable for privacy applications. It remains as future work to determine whether cGANs would be successful in mounting pre-image attacks for FLoC or MinHash.

Another instance of a perceptual hashing function is NeuralHash, used by Apple for Child Sexual Abuse Material (\emph{CSAM}) Detection~\cite{apple-csam}. To detect such abusive images, their system stores hashes computed using convolutional neural networks (CNN) and LSH. Their model is vulnerable to adversarial attacks~\cite{neuralhash-attack-github, neuralhash-collider-github} that can lead to non-abusive images being labeled as abusive. To our knowledge, NeuralHash has not been shown to be resistant to pre-image attacks. It remains as future work to investigate what private information the hash reveals about an image.

\textbf{Criticisms to FLoC.} While we are the first who implemented and evaluated FLoC's privacy leakage, we were not the first to criticize it. Other browser vendors pointed out FLoC's potential privacy issues (Mozilla~\cite{floc-mozilla-report}, Brave~\cite{brave-disable-floc}, Vivaldi~\cite{vivaldi-disable-floc}) as well as NGOs (e.g., EFF~\cite{EFF-FLoC-Terrible}) and advertisers~\cite{advertiser-reaction}. For example, FLoC further strengthens already existing hard-to-counter fingerprinting schemes; cohort IDs can be used to further partition users according to browsing behaviors, making tracking easier. Also, FLoC requires a trusted Chrome server that ensures $k$-anonymity and removes sensitive cohorts. However, to the best of our knowledge, a server fulfilling this requirement has not been presented yet. The Chrome server would allow Google to centralize the collection of SimHashes, creating a conflict of interest for Google. This would also strengthen Google's monopoly on advertising and tracking.

\section{Conclusions}

In this work, we studied two systems that use locality sensitive hashing (LSH) to privately handle user data. Both systems considered LSH to be privacy preserving, and, in both cases, we showed how to reconstruct a significant portion of the private inputs from just the hashes. Namely, for MinHash Hierarchy, we extracted parts of vehicle trajectories that were intended to be hidden by the MinHashes computed by the checkpoints. For Google's FLoC, we could construct pre-images to enable an efficient Sybil attack, and from the hashes we reconstructed parts of browsing history. Although Google discontinued FLoC, they had tested it on tens of millions of users underscoring their serious interest in using LSH. Our findings, together with other observed attacks like Apple's Child Sexual Abuse Material Detection, show that the LSH hashes leak substantial information about private data, a fact that is being systematically overlooked.

Our findings show the importance of evaluating the privacy leakage of any system handling sensitive data.  We leave for future work the study of other systems, such as the Topics API, systems that use perceptual hashing, and systems that use differential privacy without a proper evaluation of the information leakage under multiple queries.

\section*{Data Availability}

Further information and updates of this publication are available at \url{https://karelkubicek.github.io/post/floc}. 
Our attacks' implementations are available at \url{https://github.com/privacy-lsh/floc-minhash}. For further details on the FLoC attack, we refer to our technical report~\cite{turati2022analysing}.
The datasets used to evaluate our work are available from the corresponding publications, namely Porto Taxi dataset by Moreira et al.~\cite{moreira2013predicting} and MovieLens by Harper et al.~\cite{Harper2015-cx}.

\begin{acks}
We thank Hung Hoang for his advice on integer programming and Matteo Scarlata for his valuable feedback. We thank the MinHash Hierarchy authors for providing us with the implementation of the MinHash signature computation in their system.
\end{acks}

\bibliographystyle{ACM-Reference-Format}
\bibliography{references}

\appendix

\section{GAN-IP Attack Results}

\begin{table*}[t]
\caption{Example for the GAN-IP attack} \label{fig:gan-output-example}
\centering
\begin{tabular}{lll}
\toprule
\textbf{Target history $h$ from test data:} & \textbf{Generated history $h'$:} & \textbf{Subset of movies selected by int. prog.:} \\ 
\midrule
American President, The (1995) & Ace Ventura: Pet Detective (1994) & Ace Ventura: Pet Detective (1994) \\
Birdcage, The (1996) & Aladdin (1992) & Batman (1989) \\
Client, The (1994) & Batman (1989) & Beauty and the Beast (1991) \\
{\color[HTML]{3531FF} Crimson Tide (1995)} & Beauty and the Beast (1991) & Braveheart (1995) \\
Dances with Wolves (1990) & Braveheart (1995) & Clear and Present Danger (1994) \\
Dead Man Walking (1995) & Clear and Present Danger (1994) & Cliffhanger (1993) \\
{\color[HTML]{3531FF} Die Hard: With a Vengeance (1995)} & Cliffhanger (1993) & {\color[HTML]{3531FF} Crimson Tide (1995)} \\
{\color[HTML]{3531FF} Disclosure (1994)} & {\color[HTML]{3531FF} Crimson Tide (1995)} & {\color[HTML]{3531FF} Disclosure (1994)} \\
English Patient, The (1996) & {\color[HTML]{3531FF} Die Hard: With a Vengeance (1995)} & {\color[HTML]{3531FF} Firm, The (1993)} \\
Fargo (1996) & {\color[HTML]{3531FF} Disclosure (1994)} & Jurassic Park (1993) \\
{\color[HTML]{3531FF} Firm, The (1993)} & {\color[HTML]{3531FF} Firm, The (1993)} & {\color[HTML]{3531FF} Lion King, The (1994)} \\
Forget Paris (1995) & GoldenEye (1995) & {\color[HTML]{3531FF} Outbreak (1995)} \\
Grumpier Old Men (1995) & Jurassic Park (1993) & Pulp Fiction (1994) \\
{\color[HTML]{3531FF} Lion King, The (1994)} & {\color[HTML]{3531FF} Lion King, The (1994)} & {\color[HTML]{3531FF} Seven (a.k.a. Se7en) (1995)} \\
Mirror Has Two Faces, The (1996) & {\color[HTML]{3531FF} Outbreak (1995)} & {\color[HTML]{3531FF} Shawshank Redemption, The (1994)} \\
Mission: Impossible (1996) & Pulp Fiction (1994) & {\color[HTML]{3531FF} Silence of the Lambs, The (1991)} \\
Mrs. Doubtfire (1993) & {\color[HTML]{3531FF} Seven (a.k.a. Se7en) (1995)} & Star Trek: Generations (1994) \\
Mr. Holland's Opus (1995) & {\color[HTML]{3531FF} Shawshank Redemption, The (1994)} & True Lies (1994) \\
Nell (1994) & {\color[HTML]{3531FF} Silence of the Lambs, The (1991)} & {\color[HTML]{3531FF} Twelve Monkeys (1995)} \\
{\color[HTML]{3531FF} Outbreak (1995)} & Star Trek: Generations (1994) & {\color[HTML]{3531FF} Twister (1996)} \\
Philadelphia (1993) & True Lies (1994) & While You Were Sleeping (1995) \\
Postman, The (Postino, Il) (1994) & {\color[HTML]{3531FF} Twelve Monkeys (1995)} &  \\
Rock, The (1996) & {\color[HTML]{3531FF} Twister (1996)} &  \\
Sabrina (1995) & While You Were Sleeping (1995) &  \\
{\color[HTML]{3531FF} Seven (a.k.a. Se7en) (1995)} &  &  \\
{\color[HTML]{3531FF} Shawshank Redemption, The (1994)} &  &  \\
{\color[HTML]{3531FF} Silence of the Lambs, The (1991)} &  &  \\
Spy Hard (1996) &  &  \\
Sudden Death (1995) &  &  \\
Toy Story (1995) &  &  \\
{\color[HTML]{3531FF} Twelve Monkeys (1995)} &  &  \\
{\color[HTML]{3531FF} Twister (1996)} &  & \\
\bottomrule
\end{tabular}
\end{table*}

\end{document}